
\documentclass[a4paper, 12pt]{article}

\usepackage{amsmath,amssymb,amsfonts,theorem}
\usepackage{graphicx,color}
\usepackage{authblk}
\usepackage{fullpage}
\usepackage{tikz}
\usetikzlibrary{matrix}
\usetikzlibrary{positioning}
\usetikzlibrary{arrows,shapes, shadows, calc}
\usetikzlibrary{spy}
\usepackage{pgfplots}
\pgfplotsset{compat=1.3}

\usepackage{subcaption}


{ \theorembodyfont{\normalfont} 
\newtheorem{example}{Example}

}
\newtheorem{assumption}{Assumption}
\newtheorem{definition}{Definition}
\newtheorem{theorem}{Theorem}

\newtheorem{corollary}{Corollary}
\newtheorem{proposition}{Proposition}

\newenvironment{myassumption}[2][Assumption]{\begin{trivlist}
\item[\hskip \labelsep {\bfseries #1}\hskip \labelsep {\bfseries #2}]}  {\end{trivlist}}




\newcommand{\R}{\mathbb{R}}



\newcommand{\1}{\mathbf{1}} 
\newcommand{\0}{\mathbf{0}}

\newcommand{\dist}{\rm dist}

%
%

 %
 %
 %




\newcommand{\dst}{\displaystyle}

\def\be{\begin{equation}}
\def\ee{\end{equation}}
\def\ba{\begin{array}}
\def\ea{\end{array}}
\def\eqa{\begin{eqnarray}}
\def\eqe{\end{eqnarray}}

\def\stopmodif{\color{black}} 


\newcommand{\mc}{\mathcal}

\begin{document}

%
%
%

\title{Dynamic coupling design for nonlinear output agreement and time-varying flow control
}
\author[1]{Mathias B\"{u}rger\thanks{Work supported in part by the Cluster of Excellence in Simulation Technology (EXC301/2) at the University of Stuttgart.}}
\author[2]{Claudio De Persis\thanks{Work supported by the research grants
{\it Efficient Distribution of Green Energy} (Danish Research Council of Strategic Research), {\it Flexiheat} (Ministerie van Economische Zaken, Landbouw en Innovatie), and by a starting grant of the Faculty of Mathematics and Natural Sciences, University of Groningen.}}
\affil[1]{Institute for Systems
Theory and Automatic Control, University of Stuttgart, Pfaffenwaldring 9, 70550 Stuttgart, Germany {\tt mathias.buerger@ist.uni-stuttgart.de}
}
\affil[2]{ITM, Faculty of Mathematics and Natural Sciences, University of Groningen, Nijenborgh 4,  9747 AG Groningen, the Netherlands {\tt c.de.persis@rug.nl}
}

\maketitle 

\begin{abstract}
This paper studies the problem of output agreement in networks of nonlinear dynamical systems under time-varying disturbances, using dynamic diffusive couplings.
Necessary conditions are derived for general networks of nonlinear systems, and these conditions are explicitly interpreted as conditions relating the node dynamics and the network topology. 
For the class of incrementally passive systems, necessary and sufficient conditions for output agreement are derived. 
The approach proposed in the paper lends itself to solve flow control problems in distribution networks.  
As a first case study, the internal model approach is used for designing a controller that achieves an optimal routing and inventory balancing in a dynamic transportation network with storage and time-varying supply and demand. It is in particular shown that the time-varying optimal routing problem can be solved by applying an internal model controller to the dual variables of a certain convex network optimization problem. 
As a second case study, we show that droop-controllers in microgrids have also an interpretation as internal model controllers.
\end{abstract}


\section{Introduction}
%
Output agreement has evolved as one of the most important control objectives in cooperative control. It appears in various contexts, ranging from distributed optimization (\cite{Tsitsiklis1986}), formation control \cite{Olfati-Saber2007} up to oscillator synchronization (\cite{Stan2007}). 
%
%
Over the last years, it has become evident that the internal model principle takes a central role in output agreement problems, see e.g. \cite{Wieland2011}, \cite{Bai2011}, \cite{DePersisCDC2012}, \cite{DePersis2013}. \\
The present paper studies output agreement in networks of heterogeneous nonlinear dynamical systems affected by external disturbances. Conditions on the dynamic couplings (or equivalently design principles for controllers placed on the edges of the network) are derived, that ensure output agreement. We follow here the trail opened in \cite{Pavlov2008} for centralized output regulation and provide necessary and sufficient conditions for the solution of the output agreement problem for the class of \emph{incrementally passive} systems. \\
%
%
%
We propose an approach that is inherently different from other internal model approaches such as \cite{Wieland2011} (see \cite{Wieland2013} and \cite{Isidori2013} for an extension to nonlinear system), where systems without external disturbances are considered. The conceptual idea of \cite{Wieland2011} can be summarized as follows. Each node is augmented with a local controller that contains the model of a reference system, identical for all nodes. The local controllers are designed such that the node dynamics asymptotically track the reference system. The local (``virtual'') copies of the reference system are then synchronized with static diffusive couplings. 
The approach considered in the present paper is inherently different. Most obviously, the objective of this paper is the design of dynamic couplings, rather than the design of local controllers. Furthermore, as external signals are assumed to affect the node dynamics, the assumptions of  \cite{Wieland2011} do not hold (e.g., the controllability of the complete node dynamics is not given) and therefore the approach of \cite{Wieland2011} is not applicable. Incrementally passive systems and disturbance rejection are also dealt with in \cite{DePersisCDC2012}. However, the framework we propose here, inspired by \cite{Pavlov2008},   is completely different and leads to a family of new distinct results that have not been considered in \cite{DePersisCDC2012}. 
Therefore, our results complement the existing approaches and add a new perspective to internal model control for output agreement.

The contributions of this paper are as follows.  
We consider networks of nonlinear systems, interacting according to an undirected network topology. The design objective is to design controllers placed on the edges of the network that achieve output agreement.
We present and discuss necessary conditions for the feasibility of the problem. For the class of linear systems, we provide an interpretation of these conditions, relating the node dynamics and the network topology, that explain the important role of passivity in output agreement problems.
Following this, sufficient conditions for output agreement in networks of incrementally passive systems are provided. We prove that the output agreement problem is feasible if one can find an incrementally passive internal model controller.
A relevant class of nonlinear systems is presented, for which the proposed internal model controller design is always possible. 
To clarify the relation to the existing literature, two special situations are discussed, where either output agreement can be reached with static diffusive couplings or where the disturbances are constant. 
Following the general theoretic discussion, the internal model control design approach is shown to be relevant for different applications. 
First, the problem of \emph{optimal routing control} in distribution systems with time-varying demand is considered, as they appear, e.g., in supply chains (\cite{Alessandri2011}) or data networks (\cite{Moss1983}). Following the internal model control design procedure, routing controllers are designed that achieve a balancing of the inventory levels and an optimal routing of the flow. 
Second, it is shown that \emph{droop-controllers} in microgrids, as, e.g., studied in \cite{SimpsonPorco2013}, turn out to be designed exactly in accordance to the internal model control approach. In view of this, the internal model control approach provides the theoretical framework for the analysis and design of networked systems.

%
The remainder of the paper is organized as follows. The problem formulation and necessary conditions for output agreement are presented in Section \ref{sec.OutputAgreementNec}. Sufficient conditions for output agreement in networks of incrementally passive systems are discussed in Section \ref{sec.Sufficient}. 
A constructive procedure for the design of such controllers for a class of nonlinear systems in presented in Section \ref{subsec.class.nonl}.
In Section \ref{sec.Relation}, the relation to known methods in the literature is formally discussed.
The time-varying optimal distribution problem is presented in Section \ref{sec.OptimalDistribution} and the interpretation of droop-controllers a internal model controllers is provided in Section \ref{sec.Droop}.

\textbf{Notation:}
 The set of (positive) real numbers is denoted by $\mathbb{R}$ ($\mathbb{R}_{\geq}$). 
Given two matrixes $A$ and $B$, the \emph{Kronecker product} is denoted by $A \otimes B$. 
The Moore-Penrose inverse (or pseudo-inverse) of a non-invertible matrix $A$ is denoted by $A^{\dagger}$.
The \emph{range-space} and \emph{null-space} of a matrix $B$ are denoted by $\mc{R}(B)$ and $\mc{N}(B)$, respectively.
A graph $\mc{G}=(V,E)$ is an object consisting of a finite set of nodes, $|V| = n$, and edges, $|E| = m$.
The incidence matrix $B \in \mathbb{R}^{n \times m}$ of the
graph $\mc{G}$ with arbitrary orientation, is a $\{0, \pm 1\}$ matrix with $[B]_{ik}$ having value
`+1' if node $i$ is the initial node of edge $k$, `-1' if it is the terminal
node, and `0' otherwise.

\section{Problem formulation and necessary conditions} \label{sec.OutputAgreementNec}

We consider a network of dynamical systems defined on a connected, undirected graph $\mc{G}=(V,E)$. Each node represents a nonlinear system 
\be\label{nonl.systems.0}\ba{rcll}
\dot x_i &=& f_i(x_i, u_i, w_i)\\
y_i &=& h_i(x_i, w_i), & i=1,2,\ldots, n,
\ea\ee
where $x_i\in \R^{r_i}$ is the state, and $u_i , y_i\in \R^{p}$ are the input and output, respectively. Each system \eqref{nonl.systems.0} is driven by the time-varying signal $w_i \in \R^{q_i}$, representing, e.g.,~a disturbance or reference.
We assume that the exogenous signals $w_i$ are generated by systems of the form 
\begin{align} \label{sys.Exo}
\dot w_i =s_i(w_i), \quad w_i(0) \in \mc{W}_i,
\end{align}
where $\mc{W}_i$ is a set whose properties are specified below.
\begin{assumption}
The vector field $s_{i}(w_{i})$ satisfies for all $w_i, w_i'$ the inequality
\begin{align}\label{eqn.incrementalProp}
(w_i-w_i')^T (s_i(w_i)- s_i(w_i'))\le 0. 
\end{align}
\end{assumption}
This is going to be a standing assumption in this paper. As an example, consider the linear function with skew-symmetric matrix
$
s_i(w_i)= S_i w_i,\; S_i^T+S_i=0. $ \\
We stack together the signals $w_i$, for $i=1,2,\ldots, n$, and obtain  the vector $w \in \mathbb{R}^{q}$, which satisfies the equation $\dot w =s(w)$. In what follows, whenever we refer to the solutions of  $\dot w =s(w)$, we assume that the initial condition is chosen in a compact set $\mc{ W}=\mc{W}_1\times \ldots \times \mc{W}_n$.  
The set $\mc{W}$ is assumed to be forward invariant for the system  $\dot w =s(w)$.
Similarly, let $x$, $u$, and $y$ be the stacked vectors of $x_{i}, u_{i},$ and $y_{i}$, respectively. Using this notation, the totality of all systems is given by
\be\label{nonl.systems.tot}
\ba{rcl}
\dot{w} &=& s(w) \\
\dot x &=& f(x, u, w)\\
y &=& h(x,w)
\ea
\ee
with state space $\mc{W} \times \mc{X}$ and $\mc{ X}$ a  compact  subset of $\R^{r_1}\times \ldots \times \R^{r_n}$. \\
The control objective is to reach output agreement of all nodes in the network, independent of the exact representation of the time-varying external signals.
Therefore, between any pair of neighboring nodes, i.e., on any edge of $\mc{G}$, a dynamic controller will be placed, taking the form
\be\label{nonl.im}
\ba{rcl}
\dot \xi_k &=& F_{k}(\xi_k, v_k)\\
\lambda_k &=& H_{k}(\xi_k, v_{k}), \quad k=1,2,\ldots, m,
\ea
\ee
with state $\xi_k\in \R^{\nu_k}$ and input $v_k\in \R^p$.
When stacked together, the controllers (\ref{nonl.im}) give raise to the overall controller
\be\label{im.big}
\ba{rcl}
\dot \xi &=& F(\xi,v)\\
\lambda &=& H(\xi, v),
\ea\ee
where $\xi \in \Xi$, a  compact  subset of $\R^{\nu_1}\times\ldots\times \R^{\nu_m}$. \\
Throughout the paper the following interconnection structure between the plants, placed on the nodes of $\mc{G}$, and the controllers, placed on the edges of $\mc{G}$, is considered. A controller (\ref{nonl.im}), associated with edge $k$ connecting nodes $i,j$, has access to the relative outputs $y_i-y_j$. In vector notation, the relative outputs of the systems are
\be\label{interc0}
z= (B\otimes I_p)^T y.
\ee
The controllers are then driven by the systems via the interconnection condition
\be\label{interc1}
v=-z,
\ee
where $v$ are the stacked inputs of the controllers.
Additionally, the output of the controllers influence the incident systems via the interconnection\footnote{
The interconnection structure \eqref{interc0}, \eqref{interc2} naturally represents a canonical structure for distributed control laws. This structure is often considered in the context of passivity-based cooperative control, see e.g., \cite{Arcak2007}, \cite{Bai2011},  \cite{Schaft2012}, \cite{DePersis2012a}, \cite{Burger2013}.}
\be\label{interc2}
u = (B \otimes I_p)\lambda. 
\ee 
Due to this interconnection structure the dynamics on the network can be represented as a closed-loop dynamics as illustrated in Figure \ref{fig.Structure1}.
\begin{figure}[t]
\begin{center}
\scalebox{0.6}{\setlength{\fboxrule}{1.5pt}

\tikzstyle{block} = [draw, fill=white!20, rectangle, 
    minimum height=3em, minimum width=2em]
\tikzstyle{block2} = [draw, fill=white!20, rectangle, 
    minimum height=4em, minimum width=4em]
\tikzstyle{sum} = [draw, fill=white!20, circle, node distance=1cm]
\tikzstyle{input} = [coordinate]
\tikzstyle{output} = [coordinate]
\tikzstyle{pinstyle} = [pin edge={to-,thin,black}]

\begin{tikzpicture}[auto, node distance=3cm,>=latex', line width=1pt]
    \node [input, name=input] {};
    \node [sum, right of=input] (sum) {};
    \node [block, right of=sum, node distance=4cm, minimum height=7em, minimum width=9em] (controller) {
$ \begin{array}[t]{l}
\dot{x}_{i}  = f_{i}(x_{i},u_{i},w_{i}) \\
y_i = h_i(x_i, w_i) \\
\qquad  i=1,2,\ldots, n
\end{array}
$ };
   \node [block, above of=controller, node distance=2cm, minimum width=9em](Plant) {$\dot{w}_{i} = s_{i}(w_{i})$};
    \node [below of=controller, node distance=2cm](center){};
    \node [block2, left of=center, node distance=4cm, minimum height=4em, minimum width=8em] (E) {{\Large $B \otimes I_{p}$}};
    \node [block2, right of =center, node distance=4cm, minimum height=4em, minimum width=8em] (Et){{\Large $(B \otimes I_{p})^{T}$}};
    \node [sum, below of=Et, node distance=1.8cm, label =below:$-$] (sum2) {};
      \node [block, below of= center, node distance=5em, minimum height=7em, minimum width=9em] (edges){
  $ \begin{array}[t]{l}
\dot{\xi}_k = F_{k}(\xi_k, v_k) \\
\lambda_{k} = H_{k}(\xi_k, v_{k}), \\
\qquad k=1,2,\ldots, m,
\end{array} $
      };
 
    \draw [->] (controller) -| node[pos=0.3] {{\Large $y(t)$}} (Et);
    \draw [->] (Et) -- node[pos = 0.6]{{\Large $z(t)$}} (sum2);
     \draw [->] (sum2) -- node[pos = 0.6]{{\Large $v(t)$}} (edges);
    \draw [->] (edges) -|  node[pos=0.3] {{\Large $\lambda(t)$}} (E);
    \draw [->] (E) -- node[pos = 0.98] {}(sum);
    \draw [->] (sum) -- node[pos = 0.5]{{\large $u(t)$}} (controller);
 
 \draw [->] (Plant) --  (controller); 
 
\end{tikzpicture}

\end{center}
\caption{Structure of the internal model control scheme.} \label{fig.Structure1}
\end{figure}
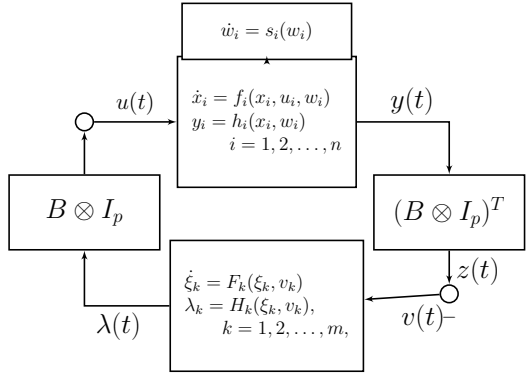
We are now ready to formally introduce the output agreement problem. 
\begin{definition}[Output Agreement Problem] \label{def.OutputAgreement}
The \emph{output agreement problem} is solvable for the process (\ref{nonl.systems.tot}) under the interconnection relations (\ref{interc0}), (\ref{interc1}), (\ref{interc2}), if there exists controllers  (\ref{im.big}), such that every solution $(w(t), x(t), \xi(t))$ originating from $\mc{W}\times \mc{X}\times \Xi$ is bounded and satisfies 
$\lim_{t\to \infty}~(B^T\otimes~I_p)~y(t)~=~\mathbf{0}$.  
\end{definition}

\subsection{Necessary Conditions}\label{subs.nc}
To start the discussion, we first investigate the \emph{necessary conditions} for the output agreement problem to be solvable. 
To this purpose, we strengthen the requirement on the convergence of the regulation error to the origin, requiring that $\lim_{t\to \infty}~(B^T\otimes~I_p)~y(t)~=~\mathbf{0}$ {\em uniformly} in the initial condition (\cite{AI-CB-ARC08}).   
The closed-loop system 
(\ref{nonl.systems.tot}), (\ref{im.big}), (\ref{interc0}), (\ref{interc1}), (\ref{interc2})
can be written as
\be\label{overall}\ba{rcll} 
\dot w &=& s(w)\\
\dot x &=& f (x, (B\otimes I_p) H(\xi), w)\\
\dot \xi &=& F(\xi,-(B\otimes I_p)^T h (x , w)). 
\ea\ee

\begin{definition}[{\small $\omega$-limit set}]
The $\omega$-limit set  $\Omega(\mc{W}\times \mc{X}\times \Xi)$ is the set of points $(w, x,\xi)$ for which there exists a sequence of pairs $(t_k, (w_k, x_k,\xi_k))$ with $t_k \to +\infty$ and $(w_k, x_k,\xi_k)\in \mc{W}\times \mc{X}\times \Xi$ such that $\varphi(t_k, (w_k, x_k,\xi_k))\to (w, x,\xi)$ as $k\to +\infty$, where $\varphi(\cdot, \cdot)$ is  the flow of (\ref{overall}).
\end{definition} 
If the output agreement problem is solvable, then  the $\omega$-limit set
$\Omega(\mc{W}\times \mc{X}\times \Xi)$ is nonempty, compact, invariant and uniformly attracts $\mc{W}\times \mc{X}\times \Xi$ under the flow of (\ref{overall}). 
Furthermore, the $\omega$-limit set must satisfy
\begin{align*}
 \Omega&(\mc{W}\times \mc{X}\times \Xi) \subseteq  \bigl\{(w,x,\xi) \in \mc{W}\times \mc{X} \times \Xi : (B\otimes I_p)^T h (x, w) = \mathbf{0} \bigr\}.
 \end{align*}
This set is the graph of a map defined on the whole $\mc{W}$ and is invariant for the closed-loop system. By the invariance, for any solution $w$ of the exosystem originating from $\mc{W}$, there exists  
$(x^w, u^w, \xi^w)$ such that 
\be\label{regulator.equations}\ba{rcl}
\dot x^w &=& f(x^w, u^w,w)\\
\0&=& (B\otimes I_p)^T h (x^w, w)
\ea\ee
and 
\be \label{controller.constraints_O}
\ba{rcl}
\dot \xi^w &=&  F(\xi^w,0)\\
u^w &=& (B\otimes I_{p}) H(\xi^w, 0). 
\ea\ee
\begin{proposition}
If the output agreement problem is solvable, then, 
for every $w$ solution to $\dot w =s(w)$ originating in $\mc{W}$, 
there must exist solutions $(x^w, u^w, \xi^w)$ such that  the equations (\ref{regulator.equations}), (\ref{controller.constraints_O}) are satisfied. 
\end{proposition}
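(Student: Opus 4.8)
The plan is to exploit the theory of $\omega$-limit sets for systems possessing a uniformly attracting steady-state behaviour, in the spirit of \cite{AI-CB-ARC08}. First I would record the two consequences of solvability that do the real work: every trajectory issuing from the compact set $\mc{W}\times\mc{X}\times\Xi$ is bounded, and the regulation error $e(t)=(B\otimes I_p)^T h(x(t),w(t))$ converges to $\mathbf{0}$ \emph{uniformly} in the initial condition. Boundedness of the forward orbit of a compact set is precisely what is needed to invoke the classical properties of its $\omega$-limit set: $\Omega(\mc{W}\times\mc{X}\times\Xi)$ is nonempty, compact, invariant under the flow of \eqref{overall}, and it uniformly attracts $\mc{W}\times\mc{X}\times\Xi$.

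Next I would pin down the error on $\Omega$. Any point of $\Omega$ is a limit $\varphi(t_k,(w_k,x_k,\xi_k))\to(w,x,\xi)$ with $t_k\to+\infty$ and $(w_k,x_k,\xi_k)\in\mc{W}\times\mc{X}\times\Xi$; since $e(t)\to\mathbf 0$ uniformly and $h$ is continuous, passing to the limit gives $(B\otimes I_p)^T h(x,w)=\mathbf 0$ at every point of $\Omega$. Hence $\Omega\subseteq\{(w,x,\xi):(B\otimes I_p)^T h(x,w)=\mathbf 0\}$, which is exactly the zero-error constraint appearing in \eqref{regulator.equations}.

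The decisive step is to show that $\Omega$ is the graph of a map defined over the whole of $\mc{W}$, that is, that the projection of $\Omega$ onto the $w$-component is all of $\mc{W}$. This is where I expect the main difficulty. The $w$-dynamics are autonomous and leave the compact set $\mc{W}$ invariant, so it suffices to argue that every $w_0\in\mc{W}$ is \emph{recurrent} for the exosystem: then, choosing any $(x_0,\xi_0)$ and a time sequence $t_k\to+\infty$ along which $w(t_k;w_0)\to w_0$, compactness of $\mc{X}\times\Xi$ extracts a convergent subsequence of $(x(t_k),\xi(t_k))$ whose limit $(x_\star,\xi_\star)$ places $(w_0,x_\star,\xi_\star)$ in $\Omega$, so $w_0$ lies in the $w$-projection. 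Equivalently, one needs $\mc{W}$ to coincide with its own $\omega$-limit set under $\dot w=s(w)$. This recurrence on the compact invariant set $\mc{W}$ is the technical hypothesis that ties the limit-set machinery to the structure of $s$; it holds in the linear skew-symmetric instance, where the flow is an isometry of $\mc{W}$ and every point is recurrent, and establishing it cleanly (rather than merely assuming it) is the delicate part of the argument.

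Finally I would conclude by invariance. Fix a solution $w(\cdot)$ of $\dot w=s(w)$ originating in $\mc{W}$; by the previous step its initial point lies in the $w$-projection of $\Omega$, and since $\Omega$ is invariant and the $w$-dynamics are decoupled, the whole trajectory lifts to a trajectory $(w(t),x^w(t),\xi^w(t))$ remaining in $\Omega$. On $\Omega$ the zero-error property gives $z=(B\otimes I_p)^T h(x^w,w)=\mathbf 0$, so the interconnection \eqref{interc1} forces $v=\mathbf 0$; the controller therefore evolves autonomously as $\dot\xi^w=F(\xi^w,0)$ with $\lambda^w=H(\xi^w,0)$, and \eqref{interc2} yields $u^w=(B\otimes I_p)H(\xi^w,0)$. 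Substituting into the plant equation gives $\dot x^w=f(x^w,u^w,w)$ together with $(B\otimes I_p)^T h(x^w,w)=\mathbf 0$, which are exactly \eqref{regulator.equations} and \eqref{controller.constraints_O}, completing the argument.
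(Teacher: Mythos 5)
Your proposal follows essentially the same route as the paper: solvability gives a nonempty, compact, invariant, uniformly attracting $\omega$-limit set contained in the zero-error set $\{(w,x,\xi): (B\otimes I_p)^T h(x,w)=\mathbf 0\}$, which is the graph of a map over all of $\mc{W}$, and invariance then yields $v=\mathbf 0$, hence \eqref{controller.constraints_O} and \eqref{regulator.equations}. You even go a step beyond the paper by correctly isolating the one assertion it leaves unargued — that the $w$-projection of the limit set is all of $\mc{W}$, which hinges on recurrence of the exosystem flow on the compact invariant set $\mc{W}$ — so the argument is sound and matches the paper's.
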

In a controller-independent form, the constraints  \eqref{regulator.equations} and \eqref{controller.constraints_O} require that there exists $(x^{w},u^{w})$ satisfying
\begin{align}
\begin{split}  \label{eqn.regulator.equations.controller.independent}
\dot x^w &= f(x^w, u^w,w), \; y^{w} = h (x^w, w) \\
u^{w} &\in \mc{R} (B\otimes I_{p}), \quad y^{w} \in  \mc{N}(B^{T}\otimes I_p),
\end{split}
\end{align} 
where $u^{w} \in \mc{R}(B\otimes I_{p})$ denotes that at every time $t$ the vector $u^{w}(t)$ is contained in the respective vector space.
Let in the following  $u^{w}$ be a solution to \eqref{eqn.regulator.equations.controller.independent}, and $\lambda^{w}_{p}$ be a trajectory satisfying $u^{w} = (B\otimes I_{q})\lambda^{w}_{p}$. The trajectory $\lambda^{w}_{p}$ is uniquely defined if and only if the graph $\mc{G}$ has no cycles. Otherwise, the matrix $B$ has a nontrivial nullspace, see \cite{Godsil2001}.  
In the most general form, the existence of a feedforward controller is equivalent to the constraint that there exists an integer $d$ and maps  $\tau: \mc{W} \mapsto \mathbb{R}^{d}$, $\phi : \mathbb{R}^{d} \mapsto \mathbb{R}^{d}$ and $\psi: \mathbb{R}^{d} \mapsto \mathbb{R}^{mp}$ satisfying 
\begin{align}\label{embedding}
\begin{split}
\dst\frac{\partial \tau}{\partial w} s(w) &= \phi(\tau(w))\\
\lambda^w_{p}   + \lambda^{w}_{0}&= \psi(\tau(w)), \; \lambda^{w}_{0} \in \mc{N}\bigl( B\otimes I_{p} \bigr).
\end{split}
\end{align}
Note that there might be an infinite number of possible controllers that can generate the desired steady state input $u^{w}$. 
If the constraint \eqref{embedding} holds, the system
\be\label{internal.model}\ba{rcl}
\dot \eta &=& \phi(\eta), \; \eta \in \mathbb{R}^{d} \\
\lambda &=& \psi(\eta)
\ea\ee
has the property that if $\eta_0= \tau(w(0))$, then the solution $\eta(t)$ to (\ref{internal.model}) starting from $\eta_0$ is  such that $(B \otimes I_{p})\lambda(t) = u^{w}(t)$ for all $t\ge 0$. 
We denote by $\eta^w$ such a solution  to (\ref{internal.model}) such that  $(B \otimes I_{p}) \psi(\eta^w(t)) = u^{w}(t)$ for all $t\ge 0$. We then let  $\lambda^w(t):= \psi(\eta^w(t))$. Here $\lambda^w(t)$ is one of the infinite many realizations of the map $\lambda_p^w(t)+\lambda_0^w(t)$, with $\lambda^{w}_{0} \in \mc{N}\bigl( B\otimes I_{p} \bigr)$.

To design a controller that decomposes into controllers on the edges of $\mc{G}$,
 we introduce a vector $\eta_{k} \in \mathbb{R}^{d}$ for each edge $k = 1,\ldots,m,$ and denote with $\psi_{k}$ the entries of the vector valued function $\psi$ corresponding to the edge $k$. Each edge is now assigned a controller of the form
\be\label{internal.model.k1}\ba{rcll}
\dot \eta_k &=& \phi(\eta_k),\; \lambda_k = \psi_k(\eta_k), & k=1,2,\ldots, m.  
\ea\ee
With the stacked vector $\eta = [\eta_{1}^{T}, \ldots, \eta_{m}^{T}]^{T}$, and vector valued functions 
$\bar{\phi}(\eta) = [ \phi(\eta_{1}), \ldots,  \phi(\eta_{m})]^{T}$, $\bar{\psi}(\eta) = [ \psi_{1}(\eta_{1}), \ldots, \psi_{m}(\eta_{m})]^{T}$,
the overall controller \eqref{im.big} is
\begin{align} \label{sys.GlobalIM}
\begin{split}
\dot{\eta} &= \bar{\phi}(\eta) \\
\lambda &= \bar{\psi}(\eta).
\end{split}
\end{align}
If the initial condition is chosen as $\eta_0= I_{m} \otimes \tau(w(0))$ then the solution $\eta(t)$ to (\ref{internal.model}) starting from $\eta_0$ is  such that $\lambda(t)=\lambda^w(t)$ for all $t\ge 0$.

\subsection{Discussion:  The Regulator Equations}
The necessary conditions \eqref{regulator.equations} and \eqref{controller.constraints_O} are a weaker form of the regulator equations of \cite{Isidori1990}. If the systems \eqref{nonl.systems.0} are such that for each given exogeneous input $w(t)$ there exists a \emph{unique} steady state response, and the $\omega$-limit set can be expressed as
$
\Omega(\mc{W}\times \mc{X}\times \Xi)=\{(w, x, \xi)\,:\,  x=\pi(w), \xi=\pi_c(w)\}, 
$
then $x^w= \pi(w)$ and the regulator equations (\ref{regulator.equations}) express the existence of an invariant manifold where the ``regulation error" $(B^T\otimes I_p)y$ is identically zero provided that the control input $u^w$ is applied. Furthermore, (\ref{controller.constraints_O}) express the existence of a controller able to  provide $u^w$.  In this case, (\ref{regulator.equations}), (\ref{controller.constraints_O}) take the familiar expressions, see e.g. \cite{Isidori1990}:
\begin{align}
\begin{split} \label{eqn.FBI_Sys}
\ba{rcl}
\dst\frac{\partial \pi}{\partial w} s(w) &=&  f(\pi(w), (B\otimes I_p) H(\pi_c(w)),w) \\
0&=& (B\otimes I_p)^T h (\pi(w), w)
\ea
\end{split}
\end{align}
and 
\begin{align}
\begin{split} \label{eqn.FBI_Cont}
\ba{rcl}
\dst\frac{\partial \pi_c}{\partial w} s(w) &=&  F(\pi_c(w),0)\\
\ea
\end{split}
\end{align}
However, there is a substantial structural difference between the output agreement problem considered here and output regulation problems, that can be best seen for linear dynamical systems.
 Suppose each system \eqref{nonl.systems.0} is of the form
\begin{align}
\begin{split} \label{eqn.linsys}
\dot{x}_{i} &= A_{i}x_{i} + G_{i}u_{i} + P_{i}w_{i} \\
y_{i} &= C_{i}x_{i},
\end{split}
\end{align}
with a linear exosystem $\dot{w}_{i} = S_{i}w_{i}$.
Let in the following $\bar{A} = \mathrm{block.diag}(A_{1},\ldots,A_{n})$, $\bar{G} = \mathrm{block.diag}(G_{1},\ldots,$ $G_{n})$, $\bar{P} = \mathrm{block.diag}(P_{1},\ldots,P_{n})$ and $\bar{C} = \mathrm{block.diag}(C_{1},$  $\ldots,C_{n})$. 
The exosystems are stacked into the dynamics $w = \bar{S}w$, with $\bar{S} = \mathrm{block.diag}(S_{1},\ldots,S_{n})$.  
The classical result of \cite{Francis1976} states that one can take $x^{w} = \Pi w$ and $\lambda^{w} = \Gamma w$ such that the regulator equations \eqref{eqn.FBI_Sys} take the form of Sylvester equations
\begin{align}
\begin{split} \label{eqn.FrancisEquations}
 &\Pi \bar{S} = \bar{A} \Pi + \bar{G}(B \otimes I_{p}) \Gamma + \bar{P} \\
 &(B^{T} \otimes I_{p})\bar{C} \Pi = 0.
\end{split}
\end{align}
Under controllability and observability assumptions, feasibility of \eqref{eqn.FrancisEquations} is necessary and sufficient for output regulation of linear systems. We will see next, that due to the networked structure of the considered problems the assumptions fail to hold, although the output agreement problem is solvable (as we show in the next sections).
First note that the regulator equations \eqref{eqn.FrancisEquations} have a solution if and only if
\begin{align} \label{eqref.FrancisCond}
\mathrm{rank} \Bigl(  \begin{bmatrix}    \bar{A} - s I_{r} & \bar{G}(B \otimes I_{p}) \\
(B\otimes I_{p})^{T}\bar{C} & \0_{np \times np} \end{bmatrix} \Bigr)  = \# \mathrm{rows},\
\end{align}
for all $s \in \sigma(\bar{S})$, where $r=\sum_{i=1}^{n}r_{i}$ and $\sigma(\bar{S})$ is the spectrum of $\bar{S}$.
The condition states that no pole of the stacked exosystem is a transmission zero of the system from input $\lambda$ to output $z = (B\otimes I_{p})^{T}y$. 
To focus the discussion on the impact of the constraints resulting from the network, we impose the following assumption:
\begin{assumption} \label{ass.NonResonance}
For each system $i\in \{1,\ldots,n\}$
\begin{align*} 
\mathrm{rank} \Bigl(  \begin{bmatrix}    A_{i} - s I_{r_{i}} & G_{i} \\
C_{i} & \0_{p \times p} \end{bmatrix} \Bigr)  = r_{i} + p,\; \forall \; s \in \sigma(\bar{S}).
\end{align*}
\end{assumption}

The important observation is that the rank condition can be violated due to the networked structure of the problem. We summarize this in the result below.
\begin{proposition} \label{prop.FailureRE}
Suppose Assumption \ref{ass.NonResonance} holds. The rank condition \eqref{eqref.FrancisCond} is violated 
if either of the following holds:
\begin{enumerate}
\item \label{LinItem1} $\mc{G}$ contains a cycle;
\item \label{LinItem2} $\mc{R} \Bigl( \bar{H}(s) (B \otimes I_{p}) \Bigr) \cap \mc{N}\Bigl( (B^{T} \otimes I_{p}) \Bigr) \neq \{ \0 \}$
for some $s \in \sigma(\bar{S})$, where $\bar{H}(s) = \bar{C}(sI - \bar{A})^{-1}\bar{G}$.   
\end{enumerate} 
Moroever, the conditions are necessary provided that for all $s\in \sigma(\bar S)$, $s\not \in \sigma(\bar A)$. 
\end{proposition}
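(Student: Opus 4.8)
The plan is to read condition \eqref{eqref.FrancisCond} as a statement about the kernel of the matrix appearing there, which I will call $M(s)$. Since its bottom-right block is $\0_{mp\times mp}$, the matrix $M(s)$ is square of size $(r+mp)\times(r+mp)$, so ``$\mathrm{rank}\,M(s) = \#\mathrm{rows}$'' is equivalent to $M(s)$ being nonsingular, i.e. to $\mc{N}(M(s)) = \{\0\}$. A generic kernel element $(\chi^T,\omega^T)^T$ with $\chi\in\R^{r}$, $\omega\in\R^{mp}$ satisfies
\begin{align*}
(\bar A - sI)\chi + \bar G(B\otimes I_p)\omega &= \0, \\
(B^T\otimes I_p)\bar C\chi &= \0,
\end{align*}
and the whole argument reduces to translating the existence of such a nonzero pair into the geometric conditions \ref{LinItem1} and \ref{LinItem2}.

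For the sufficiency of \ref{LinItem1}, I would use that a cycle in the connected graph $\mc{G}$ forces $\mathrm{rank}(B)=n-1<m$, hence $\mc{N}(B\otimes I_p)\neq\{\0\}$; picking $\0\neq\zeta\in\mc{N}(B\otimes I_p)$, the vector $(\0^T,\zeta^T)^T$ lies in $\mc{N}(M(s))$ for \emph{every} $s$, so the rank drops. For the sufficiency of \ref{LinItem2}, given $\0\neq\zeta\in\mc{R}(\bar H(s)(B\otimes I_p))\cap\mc{N}(B^T\otimes I_p)$, I write $\zeta = \bar H(s)(B\otimes I_p)\omega$ and set $\chi := (sI-\bar A)^{-1}\bar G(B\otimes I_p)\omega$. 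Then $\bar C\chi = \zeta\neq\0$ (so $\chi\neq\0$), the first kernel equation holds by construction, and the second holds because $(B^T\otimes I_p)\bar C\chi = (B^T\otimes I_p)\zeta = \0$; hence $(\chi^T,\omega^T)^T$ is a nonzero element of $\mc{N}(M(s))$.

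For the necessity I would argue directly at a fixed $s\in\sigma(\bar S)$, which by hypothesis satisfies $s\notin\sigma(\bar A)$. If the rank condition fails, take a nonzero $(\chi^T,\omega^T)^T\in\mc{N}(M(s))$. Since $\bar A-sI$ is invertible, the first equation gives $\chi = (sI-\bar A)^{-1}\bar G(B\otimes I_p)\omega$, and $\omega=\0$ would force $\chi=\0$; hence $\omega\neq\0$. Now split on $\zeta:=\bar C\chi$. If $\zeta\neq\0$, then $\zeta = \bar H(s)(B\otimes I_p)\omega\in\mc{R}(\bar H(s)(B\otimes I_p))$ while the second equation gives $\zeta\in\mc{N}(B^T\otimes I_p)$, so \ref{LinItem2} holds. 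If $\zeta=\0$, I would invoke Assumption \ref{ass.NonResonance} in stacked form: block-diagonality yields $\mathrm{rank}\begin{bmatrix}\bar A - sI & \bar G\\ \bar C & \0\end{bmatrix} = r+np$ for all $s\in\sigma(\bar S)$, so this node-level Rosenbrock matrix is injective; applied to the pair $(\chi,(B\otimes I_p)\omega)$, which satisfies $(\bar A-sI)\chi + \bar G(B\otimes I_p)\omega=\0$ and $\bar C\chi=\0$, it forces $\chi=\0$ and $(B\otimes I_p)\omega=\0$. As $\omega\neq\0$, this exhibits a nontrivial element of $\mc{N}(B\otimes I_p)$, hence of $\mc{N}(B)$, so $\mc{G}$ contains a cycle and \ref{LinItem1} holds.

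The only genuinely delicate step is this last one: converting the ``degenerate'' kernel vectors (those with $\bar C\chi=\0$) into a cycle. This is precisely where Assumption \ref{ass.NonResonance} is indispensable, since it excludes node-level transmission zeros at $\sigma(\bar S)$ and thereby guarantees that any rank loss of $M(s)$ not accounted for by \ref{LinItem2} must originate in the null space of the incidence operator $B\otimes I_p$. I expect the remaining care to be mostly bookkeeping: tracking the matrix dimensions (so that $M(s)$ is indeed square) and keeping in mind that both subspaces in \ref{LinItem2} live in $\R^{np}$.
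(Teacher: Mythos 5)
Your proof is correct and follows the same basic strategy as the paper's: both reduce the rank condition to the (non)triviality of the kernel of the square matrix in \eqref{eqref.FrancisCond} and treat the two directions by explicit kernel constructions. The sufficiency arguments are essentially identical (a cycle gives a kernel vector $(\0,\zeta)$ with $\zeta\in\mc{N}(B\otimes I_p)$; a nonzero element of the intersection in item 2 gives a kernel vector via $\chi=(sI-\bar{A})^{-1}\bar{G}(B\otimes I_p)\omega$). Where you genuinely diverge is in the necessity direction: the paper branches on whether $\mc{G}$ contains a cycle and, in the acyclic case, concludes item 2 from $(B^T\otimes I_p)\bar{H}(s)(B\otimes I_p)\lambda_0=\0$ \emph{without verifying} that the vector $\bar{H}(s)(B\otimes I_p)\lambda_0$ is itself nonzero -- which is needed for the intersection to be nontrivial. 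You instead branch on whether $\bar{C}\chi$ vanishes and invoke Assumption \ref{ass.NonResonance} (in stacked, block-diagonal form) to show that the degenerate case $\bar{C}\chi=\0$ forces $(B\otimes I_p)\omega=\0$ and hence a cycle. This is exactly the step the paper leaves implicit, and it is the only place where Assumption \ref{ass.NonResonance} actually enters the necessity argument; your version is therefore slightly more complete. One cosmetic point in your favour: the zero block of the matrix in \eqref{eqref.FrancisCond} has dimensions $mp\times mp$ (as you state), not $np\times np$ as written in the paper, since $(B^T\otimes I_p)\bar{C}$ has $mp$ rows.
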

The proof is presented in the appendix. 
The first conditions shows that the regulator equations \eqref{eqn.FrancisEquations} have no solution if graph contains cycles 
or if the transfer functions of the dynamical systems ``rotate'' $\mc{R} (B \otimes I_{p})$ in such a way that it intersects nontrivially
its orthogonal space  $\mc{N} (B^{T} \otimes I_{p})$.
The previous result gives an intuition about a class of systems for which the output agreement problem is feasible.
\begin{corollary} \label{cor.SPR}
Assume Assumption \ref{ass.NonResonance} holds and $\mc{G}$ contains no cycles. Suppose furthermore that all eigenvalues of $\bar{S}$ have zero real part.
Then the equations \eqref{eqn.FrancisEquations} are feasible if $\bar{H}(s)$ is \emph{strictly positive real}.\footnote{We refer to \cite[Def. 6.4]{Khalil2002} for the definition of a strictly positive real transfer function.}
\end{corollary}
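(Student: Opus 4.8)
The plan is to deduce feasibility from Proposition \ref{prop.FailureRE} by ruling out both of its failure conditions and then invoking the classical solvability result. First I would record a preliminary observation: strict positive realness of $\bar H(s)$ forces the poles of $\bar H$ into the open left half-plane, so taking the realization to be minimal (so that $\sigma(\bar A)$ coincides with those poles) gives $\bar A$ Hurwitz. Since by hypothesis $\sigma(\bar S)\subset j\R$, this yields $\sigma(\bar A)\cap\sigma(\bar S)=\emptyset$. This has two consequences: $\bar H(s)=\bar C(sI-\bar A)^{-1}\bar G$ is well defined at every $s\in\sigma(\bar S)$, and the proviso $s\notin\sigma(\bar A)$ of the necessity clause of Proposition \ref{prop.FailureRE} is satisfied. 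Consequently the rank condition \eqref{eqref.FrancisCond} holds at $s$ if and only if neither failure condition of the Proposition occurs. The no-cycle hypothesis disposes of item \ref{LinItem1}, so it remains only to show that item \ref{LinItem2} fails, i.e. that $\mc R(\bar H(s)(B\otimes I_p))\cap\mc N(B^T\otimes I_p)=\{\0\}$ for every $s\in\sigma(\bar S)$.

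The core step is this emptiness-of-intersection computation, carried out pointwise over $\mathbb{C}$ at a fixed $s=j\omega\in\sigma(\bar S)$. Suppose $\zeta$ lies in the intersection. Being in the range, $\zeta=\bar H(j\omega)\nu$ with $\nu=(B\otimes I_p)\mu\in\mc R(B\otimes I_p)$; being in the null-space, $(B\otimes I_p)^T\zeta=\0$. I would left-multiply this last identity by $\mu^*$ and use $\mu^*(B\otimes I_p)^T=\nu^*$ to obtain the scalar identity $\nu^*\bar H(j\omega)\nu=0$. Adding this to its conjugate yields $\nu^*\bigl(\bar H(j\omega)+\bar H(j\omega)^*\bigr)\nu=0$. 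Here the strict positive realness enters: evaluated at the finite imaginary frequency $j\omega$ it guarantees that the Hermitian matrix $\bar H(j\omega)+\bar H(j\omega)^*$ is positive definite, whence $\nu=\0$ and therefore $\zeta=\bar H(j\omega)\nu=\0$. Thus item \ref{LinItem2} cannot hold at any $s\in\sigma(\bar S)$.

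With both failure conditions excluded, the preliminary observation of the first paragraph lets me apply the necessity direction of Proposition \ref{prop.FailureRE} to conclude that \eqref{eqref.FrancisCond} holds for every $s\in\sigma(\bar S)$. By the classical result of \cite{Francis1976} this is precisely the solvability condition for the Sylvester (Francis) equations \eqref{eqn.FrancisEquations}, so a pair $(\Pi,\Gamma)$ solving them exists, which is the assertion of the corollary.

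I expect the main obstacle to be the bridge between the abstract SPR property of \cite[Def.~6.4]{Khalil2002} and the concrete pointwise inequality $\bar H(j\omega)+\bar H(j\omega)^*>0$ used above, together with the accompanying Hurwitz claim for $\bar A$. One must be careful that the frequencies occurring in $\sigma(\bar S)$ are finite, so that the behaviour of $\bar H$ at $\omega=\infty$, where SPR may only guarantee a semidefinite limit, never intervenes, and that the entire argument is run over the complex field, since the relevant $s$, $\mu$, $\nu$, and $\zeta$ are genuinely complex.
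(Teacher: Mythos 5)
Your proposal is correct and follows essentially the same route as the paper: both reduce the claim to excluding item \ref{LinItem2} of Proposition \ref{prop.FailureRE} and do so via the identical computation, namely showing that a nonzero element of the intersection would force $\nu^*\bigl(\bar H(j\omega)+\bar H(j\omega)^*\bigr)\nu=0$ for some $\nu=(B\otimes I_p)\mu$, contradicting strict positive realness. The only (welcome) difference is that you make explicit the proviso $s\notin\sigma(\bar A)$ needed for the necessity direction of the proposition and the care required with complex quantities, points the paper's proof leaves implicit.
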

The proof is presented in the appendix. The result suggests that passivity takes an outstanding role in the output agreement problem. 

\section{Output agreement under time-varying disturbances} \label{sec.Sufficient}

In this section we highlight sufficient conditions that lead to a solution of the problem for a special class of systems, namely \emph{incrementally passive} systems. Our approach follows the line of \cite{Pavlov2008}, where the following notion of a regular storage function was introduced.
\begin{definition}[\cite{Pavlov2008}]
A storage function $V(t,x,x')$ is called regular if for any sequence $(t_{k},x_{k},x_{k}')$, $k=1,2,\ldots$, such that $x'_{k}$ is bounded, $t_{k}$ tends to infinity, and $|x_{k}| \rightarrow \infty$, it holds that $V(t_{k},x_{k},x_{k}') \rightarrow \infty$, as $k \rightarrow \infty$.
\end{definition}
The dissipativity characterization of incremental passivity provided in \cite{Pavlov2008} is as follows.
\begin{definition}\label{pip}			
The system \eqref{nonl.systems.0} is said to be \emph{incrementally passive} if there exists a $C^{1}$ regular storage function $V_i:\mathbb{R}_{\geq 0} \times \mathbb{R}^{r_{i}} \times \mathbb{R}^{r_{i}} \to \mathbb{R}_{\geq 0}$ such that for any two inputs $u_{i}, u_{i}'$ and any two solutions $x_{i}$,$x_{i}'$, corresponding to these inputs, the respective outputs $y_{i}$, $y_{i}'$ satisfy
\be\label{di.x}\ba{l}
\dst\frac{\partial V_i}{\partial t}+\dst\frac{\partial V_i}{\partial x_i} f_i(x_i, u_i , w_i) +
\dst\frac{\partial V_i}{\partial x'_i} f_i(x_i', u_i ', w_i) \le (y_i-y_i')^T (u_i -u_i ').
\ea
\ee
\end{definition}

\setcounter{example}{0} 
\begin{example} Linear  systems of the form \eqref{eqn.linsys} that are passive from the input $u_i $ to the output $y_i$ are also incrementally passive, with $V_i=\frac{1}{2} (x_i-x_i')^T Q_i (x_i-x_i')$ and  $Q_i=Q_i^T>0$ the matrix such that $A_i^T Q_i +Q_i A_i\le 0$ and $Q_i G_i =C_i^T$.  
\end{example}

\begin{example}\label{ex2}
Nonlinear systems of the form
\be\label{sys.ex2}\ba{rcl}
\dot x_i &=& f_i(x_i) +G_i u_i  +P_i w_i\\
y_i &=& C_i x_i 
\ea\ee
with $f_i(x_i)=\nabla F_i(x_i)$, $F_i(x_i)$ twice continuously differentiable and concave, and $G_i=C_i^T$ are incrementally passive. In fact, by concavity of $F_i(x_i)$, $(x_i-x_i')^T ( f_i(x_i)- f_i(x_i'))\le 0$, and $V_i=\frac{1}{2} (x_i-x_i')^T (x_i-x_i')$ is the incremental storage function. 
\end{example}
For the sake of brevity, we will in the following sometimes write $\dot{V}_{i}$ for the directional derivative $\frac{\partial V_i}{\partial t}+\frac{\partial V_i}{\partial x_i} f_i(x_i, u_i , w_i) +
\frac{\partial V_i}{\partial x'_i} f_i(x_i', u_i ', w_i)$.  Incremental passivity can also be defined for static nonlinear systems. A static system $y_{i} = h_{i}(u_{i},t)$ is said to be incrementally passive if it satisfies the monotonicity condition
\begin{align}\label{static.inc.pass}
(h_{i}(u_{i},t) - h_{i}(u_{i}',t))^{T}(u_{i} - u_{i}') \geq 0,
\end{align} 
for all input pairs $u_{i}, u_{i}'$ and all times $t \geq 0$.

In the previous section, it was shown that the controllers at the edge have to take the form  \eqref{internal.model.k1}.
Now, they must be completed by considering additional control inputs that guarantee the achievement of the steady state. While we require the internal model to be identical for all edges, i.e., $\phi(\eta_k)$, the augmented systems might be different. 
Then, the controllers \eqref{internal.model.k1} modify as 
\be\label{internal.model.k.plus.stab}\ba{rcll}
\dot \eta_k &=& \phi_{k}(\eta_k,v_k) \\
\lambda_k &=& \psi_k(\eta_k), & k=1,2,\ldots, m, 
\ea\ee
where all controllers reduce to the common internal model if no external forcing is applied, i.e., $\phi_k(\eta_k,0) =\phi(\eta_k) $. The controller is then said to have the \emph{internal model property}.
The following is the main standing assumption that the controllers must satisfy. 

\begin{assumption}\label{a.ip.im}
For each $k=1,2,\ldots, m$, there exists regular storage functions $W_k(\eta_k, \eta_k')$, with $W_k:\R^{q_k}\times \R^{q_k}\to \R_+$ such that
\begin{align}\label{di.eta} 
\begin{split}
 \dst\frac{\partial W_k}{\partial \eta_k} \phi_k(\eta_k, v_k) + 
 \dst\frac{\partial W_k}{\partial \eta'_k} \phi_k(\eta'_k, v'_k)  \le
 (\lambda_k-\lambda_k')^T (v_k-v_k').
 \end{split}
\end{align}
\end{assumption}

It is in general difficult to design the incrementally passive 
 controllers above. An  important example  when the design is possible is when the feedforward control input is linear, that is (\ref{embedding}) is satisfied with $\tau= {\rm Id}$, $\phi=s$ and $\psi$ is a linear function of its argument. In this case, we let 
$
\phi_k(\eta_k, 0)=s(\eta_k), \quad \psi_k(\eta_k)= M_k\eta_k
$
and define 
\begin{align} \label{eqn.IncPassiveIM}
\phi_k(\eta_k, v_k)=s(\eta_k)+M_k^T v_k.
\end{align}
Then, by definition of $s$ as the gradient of a concave function, the storage function $W_k(\eta_k, \eta_k')=\frac{1}{2} (\eta_k-\eta_k')^T (\eta_k-\eta_k')$ satisfies
\begin{align}
\begin{split} 
 \dst\frac{\partial W_k}{\partial \eta_k} \phi_k(\eta_k, v_k) +
 \dst\frac{\partial W_k}{\partial \eta'_k} \phi_k(\eta'_k, v'_k)
&= 
 (\eta_k-\eta_k')^T (s(w_k)- s(w_k)') 
 +(\eta_k-\eta_k')^T M_k^T (v_k-v_k') \\
& \le   (\psi_k(\eta_k)-\psi_k(\eta_k')) (v_k-v_k'), 
\end{split}
\end{align}
that is (\ref{di.eta}).

We state below the main result of the section that, while  extending to networked systems the results 
of \cite{Pavlov2008}, provides a solution to the output agreement problem in the presence of time-varying disturbances.

\begin{theorem}\label{th1}
Consider the network $\mc{G}$ with dynamics on the nodes  \eqref{nonl.systems.tot}.
Suppose all exosystems satisfy  \eqref{eqn.incrementalProp}, the regulator equations  \eqref{regulator.equations} hold, and all node dynamics are incrementally passive.
Consider the controllers
\be\label{controller.nu}
\ba{rcl}
\dot \eta &=& \bar{\phi}(\eta,v)\\
\lambda &=& \bar{\psi}(\eta) + \nu
\ea\ee
where $\bar{\phi}$ and $\bar{\psi}$ are the stacked functions of $\phi_{k}(\eta_{k},v_{k})$ and $\psi_{k}(\eta_{k}$), and $\nu$ is an additional input to be designed. 
Suppose the controllers have the internal model property and satisfy Assumption \ref{a.ip.im}. 
Then, the controller \eqref{controller.nu} with the interconnection structure
\be\label{int.constr}\ba{ll}
u=(B\otimes I_p) \lambda, \quad v=-(B^T\otimes I_p) y. \\ 
\ea\ee
and $\nu := v = -(B^T\otimes I_p) y$ solves the output agreement problem, that is every solution starting from $\mc{W}\times \mc{X}\times \Xi$ is bounded and 
\[
 \lim_{t\to+\infty} (B\otimes I_p)^T y(t) =\mathbf{0}.  
\]
\end{theorem}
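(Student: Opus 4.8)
The plan is to compare the closed-loop trajectory with the steady-state (regulator) trajectory and to exhibit the sum of all incremental storage functions as a Lyapunov function whose derivative is bounded above by $-|v|^2$. First I would invoke the hypothesis that the regulator equations \eqref{regulator.equations} hold to fix, for the given exosystem solution $w(\cdot)$, a steady-state trajectory $(x^w,u^w)$ with $u^w=(B\otimes I_p)\lambda^w$ and $(B^T\otimes I_p)h(x^w,w)=\0$. By the internal model property the controller, run from its natural copy of the exosystem, produces a trajectory $\eta^w$ with $\lambda^w=\bar\psi(\eta^w)$ and $(B\otimes I_p)\bar\psi(\eta^w)=u^w$. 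The key observation here is that the regulation error vanishes on this trajectory, $(B^T\otimes I_p)y^w=\0$, so the steady-state controller input is $v^w=-(B^T\otimes I_p)y^w=\0$ and therefore $\nu^w=\0$ as well.

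Next I would introduce the composite storage function
\[
U=\sum_{i=1}^n V_i(t,x_i,x_i^w)+\sum_{k=1}^m W_k(\eta_k,\eta_k^w),
\]
with $V_i$ the incremental storage functions of Definition~\ref{pip} and $W_k$ those of Assumption~\ref{a.ip.im}, both of which are regular. Differentiating along the closed loop, applying \eqref{di.x} node-by-node with second solution $(x_i^w,u_i^w)$ and \eqref{di.eta} edge-by-edge with second solution $(\eta_k^w,v_k^w)$, $v_k^w=0$, I obtain
\[
\dot U\le (y-y^w)^T(u-u^w)+\big(\bar\psi(\eta)-\bar\psi(\eta^w)\big)^T v.
\]

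The heart of the argument is the cancellation forced by the interconnection. Writing $u-u^w=(B\otimes I_p)(\lambda-\lambda^w)$ and passing the incidence matrix across the inner product turns the first term into $\big((B^T\otimes I_p)(y-y^w)\big)^T(\lambda-\lambda^w)$; since $(B^T\otimes I_p)y^w=\0$ this factor equals $(B^T\otimes I_p)y=-v$, while $\lambda-\lambda^w=\big(\bar\psi(\eta)-\bar\psi(\eta^w)\big)+\nu$. Substituting and using the feedback choice $\nu=v$, the two terms carrying $\bar\psi(\eta)-\bar\psi(\eta^w)$ cancel and the term $-v^T\nu=-|v|^2$ survives, giving
\[
\dot U\le -|v|^2=-\big|(B^T\otimes I_p)y\big|^2.
\]
Thus the skew-symmetric coupling annihilates the $\bar\psi$ cross terms and the extra input $\nu=v$ supplies the strict dissipation.

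Finally I would extract convergence from $\dot U\le -|v|^2\le 0$. Monotonicity of $U$ together with regularity of the storage functions keeps the trajectory $(x(\cdot),\eta(\cdot))$ bounded. Integrating the inequality gives $\int_0^\infty |v(\tau)|^2\,d\tau\le U(0)<\infty$, so $(B^T\otimes I_p)y=-v\in L_2$; boundedness of the state and smoothness of the vector fields make $v(\cdot)$ uniformly continuous, whence Barbalat's lemma yields $v(t)\to\0$, i.e. $(B^T\otimes I_p)y(t)\to\0$. I expect this last step to be the main obstacle: because $U$ and the dynamics depend on time through $w(t)$, a direct LaSalle invariance argument is unavailable, and one must instead combine regularity of the storage functions with the $L_2$/Barbalat reasoning, in the spirit of \cite{Pavlov2008}, to certify that the agreement set attracts $\mc{W}\times\mc{X}\times\Xi$ uniformly.
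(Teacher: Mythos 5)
Your proposal is correct and follows essentially the same route as the paper: the same composite storage function $U=\sum_i V_i+\sum_k W_k$ evaluated against the steady-state trajectory $(x^w,\eta^w)$ with $v^w=\nu^w=\0$, the same cancellation of the $\bar\psi$ cross terms through the skew interconnection leaving $\dot U\le -\|(B^T\otimes I_p)y\|^2$, and the same boundedness-plus-$L_2$-plus-Barbalat conclusion (the paper phrases the uniform-continuity step as boundedness of $\frac{d}{dt}(z^Tz)$, which is equivalent to your formulation).
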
 
\begin{proof}
By the incremental passivity property of  the $x$ subsystem in \eqref{nonl.systems.tot}  and (\ref{regulator.equations}), it is true that 
\begin{align*}
\dst\frac{\partial V}{\partial t}+\dst\frac{\partial V}{\partial x} f(x, u , w) +
\dst\frac{\partial V}{\partial x^w} f(x^w, u^w, w) 
\le (y-y^w)^T (u -u^w),
\end{align*}
where $V=\sum_i V_i$. Similarly by Assumption \ref{a.ip.im}, the system  (\ref{controller.nu}) satisfies
\[
 \dst\frac{\partial W}{\partial \eta} \bar{\phi}(\eta, v) +
 \dst\frac{\partial W}{\partial \eta^w} \bar{\phi}(\eta^w) \le
 (\lambda-\lambda^w)^T v - (\nu - \nu^{w})v,
\]
with $W= \sum_k W_k$ and $\bar{\phi}(\eta^{w}) = I_{m} \otimes \phi(\eta^{w})$.
 Bearing in mind the interconnection constraints
$u=(B\otimes I_p)  \lambda$,  $u^w=(B\otimes I_p) \lambda^w,$ and $v=-(B^T\otimes I_p) y,$  
and letting $U((x,x^w), (\eta, \eta^w))=V(x,x^w)+W(\eta, \eta^w)$ we obtain 
\begin{align*}
&\dot U((x,x^w), (\eta, \eta^w)) := \dot{V}(x,x^w) + \dot{W}(\eta, \eta^w)\\
& \leq (y-y^w)^T (u -u^w) +(\lambda-\lambda^w)^T v - (\nu-\nu^{w})^{T}v \\
& = (y-y^w)^T (B \otimes I_p)(\lambda-\lambda^w) \\
&-(\lambda-\lambda^w)^T (B^T \otimes I_p)y+(\nu-\nu^{w})^T (B^T \otimes I_p)y.
\end{align*}
By definition of output agreeement, $(B \otimes I_p)^T y^w=0$ and the previous equality becomes
\begin{align}
\begin{split} \label{eqn.LyapDer}
\dot U((x,x^w), (\eta, \eta^w)) &\leq  
 \nu^T (B^T \otimes I_p)y
=  -||(B^T \otimes I_p)y||^2= -z^T z,
\end{split}
\end{align}
by definition of $\nu=-z$ and $\nu^{w} = 0$. Since $U$ is non-negative and non-increasing, then $U(t)$ is bounded. As $x^w, \eta^w$ are bounded\footnote{By definition, $(w,x^w, \eta^w)$ belongs to the $\omega$-limit set, which is compact. Hence, $x^w, \eta^w$ are bounded.}  and $U$ is regular, then $x,\eta$ are bounded as well. Hence the solutions exist for all $t$.  
Integrating the latter inequality we obtain 
\[
\dst\int_0^{+\infty} z^T(s) z(s) ds \le U(0). 
\]
By Barbalat's lemma, if one proves that $\frac{d}{dt} z^T(t) z(t)$ is bounded then one can conclude that $z^T(t) z(t)  \to 0$. Now, $z(t)=(B^T\otimes I_p) y=(B^T\otimes I_p) h(x,w)$ is bounded because $x,w$ are bounded. If $h$ is continuously differentiable and $\dot x, \dot w$ are bounded, then $\dot z$ is bounded and one can infer that $\frac{d}{dt} z^T(t) z(t)$ is bounded. By assumption, $w$ is the solution of $\dot w=s(w)$ starting from a forward invariant compact set. Hence, both $w$ and $\dot w$ are bounded. On the other hand, $\dot x$ satisfies
\[
\dot x = f(x, (B\otimes I_p)\bar{\psi}(\eta)-z, w)
\]
which proves that  it is bounded because $x, \eta, z$ were proven to be bounded, while $w$ is bounded by assumption. Therefore, $\dot x, \dot w$ are bounded and this implies that $\frac{d}{dt} z^T(t) z(t)$ is bounded. Then by Barbalat's Lemma we have $\lim_{t\to+\infty} z(t)=\mathbf{0}$ as claimed. 
\end{proof}

The result still holds true, if any of the dynamical systems on the nodes or on the edges, is replaced by a static incrementally passive systems. 
As a matter of fact, denoting by $\bar I\subseteq\{1,2,\ldots, n\}$ the subset of indices corresponding to {\em dynamic} incrementally passive systems, it is enough to replace  the Lyapunov function $V$ with $\sum_{i\in \bar I} V_i$. Then, exploiting (\ref{static.inc.pass}),  one can still prove that 
(\ref{eqn.LyapDer}) holds. This proves that the states $x_i$, $i\in \bar I$, $\eta$ are bounded. Notice that the outputs of the static nonlinearities $y_i = h_i(u_i, t)$ are bounded provided that $h_i(u_i, \cdot)$ is a bounded function for every $u_i\in \R^p$. In fact, boundedness of the controller state $\eta$ implies boundedness of $u_i$  since   $u_i =\sum_k b_{ik} \lambda_k(\eta_k)$.
Furthermore, the assumptions on the interconnections can be weakened, if stronger assumptions on the node dynamics are imposed.

\begin{corollary} \label{coro1}
Let all assumptions of Theorem \ref{th1} hold, but assume furthermore that all node dynamics are \emph{output strictly incrementally passive}, that is, 
there exists a $C^{1}$ regular storage function $V_i$, and a positive definite function $\rho_{i}: \mathbb{R}^{p} \rightarrow \mathbb{R}$, such that for any two inputs $u_{i}, u_{i}'$ and corresponding outputs $y_{i}$, $y_{i}'$ 
\be\label{eqn.OSIP}\ba{l}
\dot{V} \leq -\rho_{i}(y_{i}-y_{i}') +  (y_i-y_i')^T (u_i -u_i ').
\ea
\ee
 Then the output agreement problem is feasible with the interconnection \eqref{int.constr} and $\nu = 0.$
\end{corollary}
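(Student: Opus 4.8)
The plan is to follow the proof of Theorem~\ref{th1} almost verbatim, replacing the bare incremental passivity bound on the node dynamics by the stronger output strict incremental passivity estimate \eqref{eqn.OSIP} and setting $\nu=0$ throughout. The key observation is that with $\nu=0$ the cross terms that in Theorem~\ref{th1} produced the dissipation $-z^Tz$ now cancel identically, so the dissipation needed to run the argument must instead be supplied by the positive definite functions $\rho_i$ coming from output strict incremental passivity.

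First I would sum \eqref{eqn.OSIP} over the node index $i$, evaluating the second solution along the steady state $(x^w,u^w,y^w)$ furnished by the regulator equations \eqref{regulator.equations}, to obtain, with $V=\sum_i V_i$,
\[
\dot V \le -\sum_i \rho_i(y_i-y_i^w) + (y-y^w)^T(u-u^w).
\]
Since $\nu=0$ the controller output is simply $\lambda=\bar\psi(\eta)$, and because $v^w=-(B^T\otimes I_p)y^w=\mathbf 0$ along the steady state, Assumption~\ref{a.ip.im} gives directly, with $W=\sum_k W_k$,
\[
\dot W \le (\lambda-\lambda^w)^T v.
\]

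Next I would form $U=V+W$ and substitute the interconnection relations $u-u^w=(B\otimes I_p)(\lambda-\lambda^w)$ and $v=-(B^T\otimes I_p)y$. Using $(B^T\otimes I_p)y^w=\mathbf 0$, the term $(y-y^w)^T(u-u^w)$ becomes $z^T(\lambda-\lambda^w)$ with $z=(B^T\otimes I_p)y$, while $(\lambda-\lambda^w)^T v=-z^T(\lambda-\lambda^w)$; the two cancel exactly, leaving
\[
\dot U \le -\sum_i \rho_i(y_i-y_i^w).
\]
Since each $\rho_i$ is positive definite, $U$ is non-increasing and hence bounded; regularity of $U$ together with boundedness of $(x^w,\eta^w)$ then yields boundedness of $(x,\eta)$ exactly as in Theorem~\ref{th1}, so solutions exist for all $t$. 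Integrating gives $\int_0^{+\infty}\sum_i\rho_i(y_i(s)-y_i^w(s))\,ds\le U(0)$, and the same boundedness of $x,\eta,w,\dot x,\dot w$ as before shows that $\frac{d}{dt}\sum_i\rho_i(y_i-y_i^w)$ is bounded, so Barbalat's lemma yields $\sum_i\rho_i(y_i-y_i^w)\to 0$. Positive definiteness of each $\rho_i$ then forces $y_i(t)-y_i^w(t)\to\mathbf 0$.

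The main obstacle is not a deep one but a bookkeeping and structural point: one must recognise that with $\nu=0$ the explicit output dissipation $-z^Tz$ of Theorem~\ref{th1} vanishes and is replaced by $-\sum_i\rho_i(y_i-y_i^w)$, and then take the extra final step of converting convergence of $y-y^w$ into convergence of the disagreement $z$. Since $(B^T\otimes I_p)y^w=\mathbf 0$ identically, this follows from $(B^T\otimes I_p)y=(B^T\otimes I_p)(y-y^w)\to\mathbf 0$ — a step that was unnecessary in Theorem~\ref{th1}, where the Lyapunov derivative controlled $z$ directly. Care is also needed in checking the regularity of $\rho_i$ needed for the Barbalat step, but this is routine given the established boundedness.
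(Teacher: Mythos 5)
Your proposal is correct and follows essentially the same route as the paper: the paper's own proof reuses $U=V+W$ from Theorem~\ref{th1}, notes that with $\nu=0$ and output strict incremental passivity the bound \eqref{eqn.LyapDer} is replaced by $\dot U\le -\sum_{i=1}^n\rho_i(y_i-y_i^w)$, and then "proceeds as in Theorem~\ref{th1}." Your write-up merely fills in the Barbalat step and the final observation that $(B^T\otimes I_p)y=(B^T\otimes I_p)(y-y^w)\to\mathbf 0$, both of which the paper leaves implicit.
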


\begin{proof}
Consider the storage function used in proof of Theorem \ref{th1}, i.e., $U((x,x^{w}),(\eta,\eta^{w})) = V(x,x^{w}) + W(\eta,\eta^{w})$. After repeating the steps of the proof of Theorem \ref{th1}, but using the output strict passivity property and setting $\nu = 0$, \eqref{eqn.LyapDer} is now replaced by
$
\dot{U}((x,x^{w}),(\eta,\eta^{w}))  \leq -\sum_{i=1}^{n}\rho_{i}(y_{i} - y_{i}^{w}), 
$
where $y_{i}^{w}=y_{j}^{w}$ for all $i\ne j$. 
Now, we can proceed as in the proof of Theorem \ref{th1}. 
\end{proof}

\section{Output agreement for a class of nonlinear systems}\label{subsec.class.nonl}

We propose now a fairly large class of nonlinear systems for which the sufficient conditions of Theorem \ref{th1} are satisfied. 
Consider the systems introduced in Example \ref{ex2}, namely  
\be\label{sys.ex2.simpl}\ba{rcll}
\dot w_i &=& s_i(w_i)\\ 
\dot x_i &=& f_0(x_i) + G u_i  +P_i w_i\\
y_i &=& C x_i & i=1,2,\ldots, n
\ea\ee
where, compared with (\ref{sys.ex2}), we have chosen the systems to have the same dynamics, i.e.~$f_i(x_i)=f_0(x_i)$ for all $i=1,2,\ldots, n$, and we have set  $G_i=C_i^T=G=C^T$. Assuming that the dynamics of the systems are the same  facilitate the design of  incrementally passive distributed controllers, as we see in the proof below. 
\begin{proposition}\label{prop.IndenticalSys}
Consider systems (\ref{sys.ex2.simpl}), where $f_0=\nabla F$ and $F$ is a twice continuously differentiable and concave map, $G=C^T$ and full column rank matrix, and the maps $s_i$ satisfy (\ref{eqn.incrementalProp}). Moreover, assume that $\mathcal{R}(P_i)\subseteq \mathcal{R}(G)$ for $i=1,2,\ldots,n$. Given the vector of disturbances $w=(w_1, \ldots, w_n)$, assume there exists a bounded solution $x_\ast$ to the system
\be\label{pre.re}
\dot x=f_0(x) +\dst\frac{\sum_{i=1}^n P_i w_i}{n}.
\ee
Then: 
\begin{enumerate}
\item 
there exists a bounded solution $x^w, u^w$ to the regulator equations (\ref{regulator.equations});
\item there exists  controllers at the edges of the form 
\be\label{ex2.contr}
\ba{rcll}
\dot{\eta}_{k} &=& s(\eta_{k}) + H_{k}v_{k} & \\
\lambda_{k} &=& H_{k}^{T}\eta_{k} +  \nu_{k}, & k=1,2,\ldots, m
\ea
\ee
such that output agreement problem is solved for the systems (\ref{sys.ex2.simpl}), interconnected with the controllers (\ref{ex2.contr}) via the conditions (\ref{int.constr}). 
\end{enumerate}
\end{proposition}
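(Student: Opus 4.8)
The plan is to exploit the structural simplification in \eqref{sys.ex2.simpl}, namely that all nodes carry the \emph{same} drift $f_0 = \nabla F$ and the \emph{same} matrices $G = C^T$: this makes a \emph{synchronized} steady state $x_i^w \equiv x_\ast$ admissible, after which the feedforward input can be reproduced by a single linear internal-model copy of the form \eqref{eqn.IncPassiveIM} on every edge, so that Theorem \ref{th1} applies verbatim.

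For the first claim I would take $x_i^w := x_\ast$ for all $i$, with $x_\ast$ the bounded solution of the averaged system \eqref{pre.re}. Then $y^w = \1 \otimes (C x_\ast)$, and since $\mc{G}$ is connected, $B^T\1 = \0$ gives $(B^T \otimes I_p) y^w = \0$, so the agreement equation in \eqref{regulator.equations} holds. The dynamic equation $\dot x_\ast = f_0(x_\ast) + G u_i^w + P_i w_i$ forces $G u_i^w = \frac{1}{n}\sum_{j} P_j w_j - P_i w_i$; the assumption $\mc{R}(P_i) \subseteq \mc{R}(G)$ puts the right-hand side in $\mc{R}(G)$, and full column rank of $G$ yields the unique, bounded solution $u_i^w = G^{\dagger}\big(\frac{1}{n}\sum_j P_j w_j - P_i w_i\big)$. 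This proves part 1.

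The decisive step for part 2 is the range condition on $u^w$. Summing over $i$ gives $G\sum_i u_i^w = \sum_i\big(\frac{1}{n}\sum_j P_j w_j - P_i w_i\big) = \0$, hence $\sum_i u_i^w = \0$ by injectivity of $G$, i.e. $u^w \in \mc{R}(B \otimes I_p)$ (for a connected graph $\mc{R}(B\otimes I_p)$ is exactly the set of stacked vectors whose node-blocks sum to zero). Writing the linear map as $u^w = L w$, the inclusion $\mc{R}(L) \subseteq \mc{R}(B \otimes I_p)$ guarantees a $\Psi$, e.g. $\Psi = (B \otimes I_p)^{\dagger} L$, with $(B \otimes I_p)\Psi = L$; reading off its $p \times q$ blocks as $M_k = H_k^{T}$ fixes the gains in \eqref{ex2.contr}. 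Choosing the identity embedding $\tau = \mathrm{Id}$, $\phi = s$, $\psi_k(\eta_k) = H_k^T \eta_k$ in \eqref{embedding} and initializing $\eta_k(0) = w(0)$ makes $\eta_k(t) = w(t)$, so $(B \otimes I_p)\bar\psi(\eta^w) = \Psi w = u^w$: the edge controllers generate precisely the required feedforward.

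It then remains to verify the hypotheses of Theorem \ref{th1}. The node dynamics are incrementally passive by Example \ref{ex2} ($f_0 = \nabla F$ concave, $G = C^T$); the controllers \eqref{ex2.contr} are of the form \eqref{eqn.IncPassiveIM} with $M_k = H_k^T$, enjoy the internal-model property $\phi_k(\eta_k,0) = s(\eta_k)$, and satisfy Assumption \ref{a.ip.im} with $W_k(\eta_k,\eta_k') = \frac{1}{2}\|\eta_k-\eta_k'\|^2$, since the cross terms cancel and the stacked monotonicity $(\eta_k - \eta_k')^T(s(\eta_k) - s(\eta_k')) \le 0$ from \eqref{eqn.incrementalProp} supplies the needed sign. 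With $\nu := v = -(B^T \otimes I_p)y$ and the interconnection \eqref{int.constr}, Theorem \ref{th1} then yields bounded trajectories and $\lim_{t\to\infty}(B^T \otimes I_p)y(t) = \0$. I expect the main obstacle to be precisely the range bookkeeping above: one must check that the synchronized ansatz pushes $u^w$ into $\mc{R}(B \otimes I_p)$ --- which is exactly where averaging \eqref{pre.re} and $\mc{R}(P_i)\subseteq \mc{R}(G)$ are used --- and that the feedforward is realizable as a \emph{linear} function of one internal-model copy per edge, so that the incrementally passive construction \eqref{eqn.IncPassiveIM} transfers without change.
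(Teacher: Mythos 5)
Your proposal is correct and follows essentially the same route as the paper: the synchronized ansatz $x^w = \1\otimes x_\ast$ for the averaged system \eqref{pre.re}, the forced definition $Gu_i^w = \frac{1}{n}\sum_j P_jw_j - P_iw_i$ resolved via $\mc{R}(P_i)\subseteq\mc{R}(G)$ and $G^{\dagger}$, the observation that the node-blocks of $u^w$ sum to zero so $u^w\in\mc{R}(B\otimes I_p)$, the linear embedding $\tau=\mathrm{Id}$, $\phi=s$, $\psi=H\eta$ with $H$ obtained from a pseudo-inverse of the incidence map, and finally Theorem \ref{th1} via the quadratic storage functions of \eqref{eqn.IncPassiveIM}. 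The only cosmetic difference is that you take the given bounded solution $x_\ast$ itself as the steady state, whereas the paper takes an arbitrary solution of \eqref{pre.re} and deduces its boundedness by incremental passivity; both are valid.
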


\begin{proof}
Take {\em any} solution $x^w_\ast$ to (\ref{pre.re}). By definition
\be\label{pre.re.star}
\dot x^w_\ast=f_0(x^w_\ast) +\dst\frac{\sum_{i=1}^n P_i w_i}{n}.
\ee
Observe that such a solution $x^w_\ast$ is necessarily bounded. As a matter of fact, in view of the assumptions on $f_0$, the incremental dissipation inequality (\ref{di.x}) hold and the incremental storage function $V(x^w_\ast, x_\ast)=\frac{1}{2}   (x^w_\ast-x_\ast)^T(x^w_\ast-x_\ast)$ satisfies
$\dot V \le 0$ (in system (\ref{pre.re}) inputs are absent). Hence $V(x^w_\ast, x_\ast)$ is bounded and by regularity of $V$ and boundedness of $x_\ast$, $x^w_\ast$ is bounded. \\
Define now 
\be\label{def.now}
G u^w_i= -\left(P_i w_i-\frac{\sum_{i=1}^n P_i w_i}{n}\right). 
\ee
Observe that since $\sum_{i=1}^n Gu^w_i=\mathbf{0}$ by construction, and $G$ is full-column rank, then $u^w\in \mathcal{R}(B  \otimes I_{p})$, i.e., the requirement imposed by the interconnection condition (\ref{int.constr}) is fulfilled. An explicit expression for $u^w$ can be given. Let 
\[
(I_n\otimes G) u^w = ((\frac{\mathbf{1}_n\mathbf{1}_n^T}{n}-I_n)\otimes I_r) Pw =: (Y\otimes I_r) Pw 
\]
where $r$ is the dimension of the state space of each system. 
Hence (\ref{def.now}) can be rewritten as 
\[
G u^w_i = \dst\sum_{j=1}^n  Y_{ij} P_j w_j,\quad \textrm{with } Y_{ij}=[Y]_{ij}.
\]
There exists a solution $u^w_i$ to the latter equation if and only if $GG^\dag b_i=b_i$ with $b_i=\sum_{j=1}^n  Y_{ij} P_j w_j$, where $G^\dag$ is the Moore-Penrose pseudo inverse. Recalling that $\mathcal{R}(P_j)\subseteq \mathcal{R}(G)$, we can assume the existence of matrices $\Gamma_j$ such that  
\[
P_j w_j =G \Gamma_j w_j. 
\]
As a result 
\[
b_i=\sum_{j=1}^n Y_{ij} P_j w_j=\sum_{j=1}^n  Y_{ij} G \Gamma_j w_j=G \dst\sum_{j=1}^n Y_{ij}  \Gamma_j w_j 
\]
that is $b_i\in \mathcal{R}(G)$, and $GG^\dag b_i=GG^\dag G \dst\sum_{j=1}^n  Y_{ij}  \Gamma_j w_j= G\dst\sum_{j=1}^n Y_{ij}  \Gamma_j w_j= \dst\sum_{j=1}^n Y_{ij}  G \Gamma_j w_j= \dst\sum_{j=1}^n Y_{ij}  P_j w_j= b_i$. Then the unique solution to (\ref{def.now}) is 
$
 u^w_i = G^\dag \dst\sum_{j=1}^N  Y_{ij} P_j w_j.
$
Replacing (\ref{def.now}) into (\ref{pre.re.star}), the latter becomes
\[
\dot x^w_\ast=f_0(x^w_\ast) +G u^w_i+P_i w_i.
\]
The latter  holds true for all $i=1,2,\ldots, n$ thus showing that  
\[
(x^w,u^w)= ((\mathbf{1}_n\otimes I_r) x^w_\ast, ((u^w)^T,\ldots, (u^w_n)^T)^T)
\]
 solves the regulator equations. \\
Bearing in mind that $(B\otimes I_p) \lambda^w =u^w$,   we have  
\[\ba{rcl}
\lambda^w &=& - (B^\dag\otimes I_p)\left(I- \frac{\mathbf{1}_n^T\mathbf{1}_n\otimes I_p}{n}\right) Pw\\
&=&  - \left(B^\dag\left(I- \frac{\mathbf{1}_n^T\mathbf{1}_n}{n}\right)\otimes I_p \right)Pw,
\ea\] 
that is $\lambda^w = H w$.  Using the embedding (\ref{embedding}) with $\tau = Id, \phi=s$ and $\psi(\eta)=H\eta$, and an analogous decomposition as in (\ref{internal.model.k1}), the internal model controller takes the form
\[\ba{rcll}
\dot \eta_k &=& s(\eta_k) \\
\lambda_k &=& H_k^T \eta_k, & k=1,2,\ldots, m.  
\ea\]
The addition of the control term $H_k v_k$
\[\ba{rcll}
\dot \eta_k &=& s(\eta_k) +H_k v_k\\
\lambda_k &=& H_k^T \eta_k, & k=1,2,\ldots, m.  
\ea\]
renders the system incrementally passive, in view of the incrementally passive nature of the map $s(\cdot)$. Recall (see Example \ref{ex2}) that the condition on $F$ that defines the dynamics of the systems according to the identity $f_0=\nabla F$ guarantees incremental passivity of systems (\ref{sys.ex2.simpl}). 
Hence, we are under the conditions of Theorem \ref{th1} and one concludes that the controllers 
\begin{align} \label{eqn.ControllerIdenticalSys}
\begin{split}
\dot \eta_k &= s(\eta_k) -H_k z_k \\
\lambda_k &= H_k^T \eta_k - z_k , \quad k=1,2,\ldots, m.  
\end{split}
\end{align}
with $z=(B^T\otimes I_p) y$ guarantee that the output agreement problem is solved.  \\
\end{proof}

The controllers, designed as in \eqref{eqn.ControllerIdenticalSys}, can be stacked together to the dynamics
\begin{align}\label{stacked.controllers}
\begin{split}
\dot{\eta} &= \bar{s}(\eta) - \bar{H} v \\
\lambda &= \bar{H}\eta - \nu,
\end{split}
\end{align}
where $\eta = [\eta_{1}^{T},\ldots,\eta_{m}^{T}]^{T} \in \mathbb{R}^{mr}$, $\bar{s}(\eta) = [s(\eta_{1})^{T}, \ldots,$ $s(\eta_{m})^{T}]^{T}$, and $\bar{H} = \mathrm{block.diag}\{H_{1}, \ldots, H_{m} \}$.

Bearing in mind the controllers (\ref{stacked.controllers}), 
one conclusion that follows immediately from the proof of the result is that steady state solution of the controllers $\eta^{w}$ can be taken as $\eta^{w} = \1_{n} \otimes w$. That is, one possible steady state solution of the output agreement problem is that each controller dynamics reproduces exactly the disturbance signal. This observation can be used to redesign the controllers. 
In particular, additional communication between the different (distributed) controllers can be used to improve the convergence of the controllers. 

\subsection{Adding communication between controllers}

Consider an additional communication network $\mc{G}_{\mathrm{comm}}$, having one node for each controller, and one edge if the controllers can exchange data.\footnote{for instance, two controllers can exchange data if their corresponding edges are incident to the same node in the original graph $\mathcal{G}$.} 
For simplicity, we assume that $\mc{G}_{\mathrm{comm}}$ is an undirected connected graph. The Laplacian matrix of the communication graph is denoted by $L_{\mathrm{comm}} \in \mathbb{R}^{m\times m}$. 
As we shall see below, the additional communication term allows us to add a diffusive coupling between the various controllers that {\em explicitly} enforces the convergence of all the controllers states $\eta_k$ to the same signal. This in turn guarantees that the stacked vector  $\bar H \bar \eta = {\rm block.diag}\{H_1^T, \ldots, H_m^T\}(\eta_1^T\ldots  \eta_m^T)^T$ converges
to   $H \eta^\ast$, for some $\eta^\ast$. We recall that under the conditions that the convergence to the solution of the output agreement problem is uniform in the initial conditions, such a signal   $\eta^\ast$ must satisfy
$(B\otimes I_p)H \eta^\ast= u^w$.  
If in addition the graph is acyclic and the system $\dot \eta = s(\eta), \lambda = H\eta$ is incrementally observable\footnote{The system $\dot \eta = s(\eta), \lambda = H\eta$ is incrementally observable if any two solutions $\eta, \eta'$ to $\dot \eta = s(\eta)$ which yield the same output necessarily coincide, i.e.~$\eta=\eta'$.}, then necessarily, $\eta^\ast= w$, i.e.~the internal model controllers asymptotically synchronize to the disturbance $w$.

By revisiting now the proof of Theorem \ref{th1}, one can directly see that the assumption of incremental passivity of the controllers, i.e., Assumption \ref{a.ip.im}, is stricter than necessary. In particular, one can  require the incremental passivity property (\ref{di.eta}) not to hold with respect to any two trajectories, but only with respect to the real and the steady state trajectory, i.e., with 
 $\eta_k'=\eta_k^w, v_k'=0, \lambda_k'=\lambda^w$.  Thus, one can replace Assumption \ref{a.ip.im} with the following weaker assumption.  

\begin{myassumption}{3a}
\textit{
Let $\eta^{w}=\tau(w)$ and $\lambda^{w} = \psi(\tau(w))$ be a solution to \eqref{embedding}, and let $v^{w} = 0$. For each $k=1,2,\ldots, m$, there exists regular storage functions $W_k(\eta_k, \eta_k^{w})$, with $W_k:\R^{q_k}\times \R^{q_k}\to \R_+$ such that
\[ \dst\frac{\partial W_k}{\partial \eta_k} \phi_k(\eta_k, v_k) + 
 \dst\frac{\partial W_k}{\partial \eta^{w}_k} \phi_k(\eta^{w}_k, \mathbf{0}) \\
 \le
 (\lambda_k-\lambda_k^{w})^T v_{k}.
\]}
\end{myassumption}
It can be readily seen that the proof of Theorem \ref{th1} remains valid if Assumption \ref{a.ip.im} is replaced by Assumption 3a. In particular, the following result holds. 

\begin{proposition}
Let all assumptions of Proposition \ref{prop.IndenticalSys} hold and let $L_{\mathrm{comm}} \in \mathbb{R}^{m \times m}$ be the Laplacian matrix of communication graph. Then the distributed controller with communication of the form
\begin{align}
\begin{split} \label{eqn.ControllerAugmented}
\dot{\eta} &= \bar{s}(\eta) -(L_{\mathrm{comm}}\otimes I_{r})\eta -  \bar{H} v \\
\lambda &= \bar{H}^{T}\eta 
\end{split}
\end{align}  
 interconnected with the node dynamics (\ref{ex2.contr}) according to (\ref{int.constr}), solves the output agreement problem 
Furthermore, $\lim_{t\to \infty}  ||\eta_k(t)-\eta_j(t)||=0$ for all $k\ne j$. 
\end{proposition}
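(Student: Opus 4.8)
The plan is to reuse the incremental-passivity bookkeeping of Theorem \ref{th1}, but now with the dissipation requirement imposed only relative to the steady-state trajectory (Assumption 3a), and to exploit the one new feature of \eqref{eqn.ControllerAugmented}, namely the diffusive term $-(L_{\mathrm{comm}}\otimes I_r)\eta$, as the source of the synchronization estimate. First I would record the steady-state data: by Proposition \ref{prop.IndenticalSys} the regulator equations \eqref{regulator.equations} admit the bounded solution $x^w=\1_n\otimes x^w_\ast$, $u^w=(B\otimes I_p)\lambda^w$, and the associated internal-model trajectory may be taken as $\eta^w=\1_m\otimes w$, so that all controller components coincide at steady state. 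The crucial consequence is that $\eta^w$ lies in the kernel of the coupling, $(L_{\mathrm{comm}}\otimes I_r)\eta^w=(L_{\mathrm{comm}}\1_m)\otimes w=\0$, so that adding $-(L_{\mathrm{comm}}\otimes I_r)\eta$ neither perturbs the feedforward nor destroys the internal-model property; writing $\tilde\eta:=\eta-\eta^w$ and using that the row sums of $L_{\mathrm{comm}}$ vanish, the coupling contributes exactly $-(L_{\mathrm{comm}}\otimes I_r)\tilde\eta$ to $\dot{\tilde\eta}$.

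Next I would take the storage function $U=V+W$ of Theorem \ref{th1}, with $V=\tfrac12\sum_i\|x_i-x_i^w\|^2$ (incremental passivity of the node dynamics, Example \ref{ex2}) and $W=\tfrac12\sum_k\|\eta_k-\eta_k^w\|^2$. Differentiating along \eqref{sys.ex2.simpl}, \eqref{eqn.ControllerAugmented} and using \eqref{eqn.incrementalProp} for $s$, the controller bound becomes
\[
\dot W\le(\lambda-\lambda^w)^Tv-\tilde\eta^T(L_{\mathrm{comm}}\otimes I_r)\tilde\eta ,
\]
i.e. the Assumption 3a inequality augmented by a nonpositive term (here I use $L_{\mathrm{comm}}\succeq0$ for the connected communication graph). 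Substituting the interconnection \eqref{int.constr}, $u=(B\otimes I_p)\lambda$, $v=-(B^T\otimes I_p)y$, and $(B^T\otimes I_p)y^w=\0$, the bilinear cross terms $(y-y^w)^T(u-u^w)$ and $(\lambda-\lambda^w)^Tv$ cancel, leaving
\[
\dot U\le-\tilde\eta^T(L_{\mathrm{comm}}\otimes I_r)\tilde\eta\le0 .
\]
Regularity of $U$ and boundedness of $(x^w,\eta^w)$ then give boundedness of $(x,\eta)$ and global existence, and integrating yields $\int_0^\infty\tilde\eta^T(L_{\mathrm{comm}}\otimes I_r)\tilde\eta\,ds\le U(0)$. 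A Barbalat argument exactly as in Theorem \ref{th1} (the integrand has bounded derivative since $\eta,\dot\eta$ are bounded) forces $\tilde\eta^T(L_{\mathrm{comm}}\otimes I_r)\tilde\eta\to0$, and connectivity of $\mc G_{\mathrm{comm}}$ turns this into $\eta_k-\eta_j\to0$ for all $k\ne j$, which is the second claim.

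The delicate point, and the main obstacle, is that in contrast to Theorem \ref{th1} the damping now bounds only the disagreement $\tilde\eta^T(L_{\mathrm{comm}}\otimes I_r)\tilde\eta$ and not $z^Tz$, so output agreement does not fall out of the same inequality. To close this gap I would pass to the $\omega$-limit set $\Omega$ of the autonomous system $(w,x,\eta)$. On $\Omega$ the estimate forces $\tilde\eta^T(L_{\mathrm{comm}}\otimes I_r)\tilde\eta\equiv0$, so the controllers are exactly synchronized, $\eta_k\equiv\eta^\ast$ for a common $\eta^\ast$, and the coupling vanishes identically there; invariance of $\Omega$ then makes each edge controller evolve as a genuine internal model forced only through $v$, so that the generated input $u=(B\otimes I_p)\bar H^T(\1_m\otimes\eta^\ast)=(B\otimes I_p)H\eta^\ast$ is a steady-state input. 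By the necessary-condition analysis of Section \ref{subs.nc}, applied to the invariant, uniformly attracting set $\Omega$, this input must be an admissible $u^w$, whence $(x,u)$ restricted to $\Omega$ satisfies the regulator equations \eqref{regulator.equations} and therefore $(B^T\otimes I_p)y\equiv\0$ on $\Omega$; uniform attractivity of $\Omega$ then upgrades this to $\lim_{t\to\infty}(B^T\otimes I_p)y(t)=\0$. I expect the rigorous justification of this last bridge, namely that synchronization of the internal models produces exactly the correct steady-state input on $\Omega$ rather than some other common value, to be the technically hardest step, and the one where the uniformity of convergence, and if necessary the incremental observability of $\dot\eta=s(\eta),\ \lambda=H\eta$, must be invoked.
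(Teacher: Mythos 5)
Your derivation of the second claim matches the paper: $\eta^w=\1_m\otimes w$ lies in the kernel of $L_{\mathrm{comm}}\otimes I_r$, the quadratic storage $W$ acquires the extra nonpositive term $-(\eta-\eta^w)^T(L_{\mathrm{comm}}\otimes I_r)(\eta-\eta^w)$, and Barbalat plus connectivity of $\mc{G}_{\mathrm{comm}}$ yield $\eta_k-\eta_j\to 0$. The gap is in the output agreement claim, and it stems from reading $\lambda=\bar H^T\eta$ literally: you thereby drop the feedthrough $\nu=v=-z$ that is present in \eqref{ex2.contr} and \eqref{eqn.ControllerIdenticalSys} (where $\lambda_k=H_k^T\eta_k-z_k$) and is implicitly retained by the paper in \eqref{eqn.ControllerAugmented}. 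With the feedthrough, the computation of Theorem \ref{th1} gives $\dot U\le -\|(B^T\otimes I_p)y\|^2-(\eta-\eta^w)^T(L_{\mathrm{comm}}\otimes I_r)(\eta-\eta^w)$, and \emph{both} conclusions follow simultaneously by the same Barbalat argument. Without it you correctly obtain only $\dot U\le -\tilde\eta^T(L_{\mathrm{comm}}\otimes I_r)\tilde\eta$, which controls the controller disagreement but says nothing about $z$; note that Corollary \ref{coro1} (which permits $\nu=0$) is unavailable here because the node dynamics of Proposition \ref{prop.IndenticalSys} are only incrementally passive, not output strictly so.

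The repair you propose does not close this gap: the "necessary-condition analysis of Section \ref{subs.nc}" is derived \emph{under the hypothesis} that output agreement holds (uniformly in the initial conditions), so invoking it to conclude that the synchronized input $(B\otimes I_p)H\eta^\ast$ must equal $u^w$, and hence that $(B^T\otimes I_p)y\equiv\0$ on $\Omega$, assumes exactly what is to be proved. Synchronization of the internal models to a common trajectory $\eta^\ast$ does not by itself force $\eta^\ast=w$ (the paper only draws that conclusion, via incremental observability and acyclicity, \emph{after} output agreement has been established), and $\dot U\le 0$ alone does not exclude $\omega$-limit points on which the outputs disagree while the controllers remain synchronized. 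You flagged this step as the hardest one, and indeed it is where the argument fails. The correct route is the paper's: keep $\lambda=\bar H^T\eta+\nu$ with $\nu=-z$, verify Assumption 3a with the additional Laplacian damping term, and then conclude as in Theorem \ref{th1}.
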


\begin{proof}
Under the assumptions of  Proposition \ref{prop.IndenticalSys} it holds that 
$\eta^{w} = w$. 
Consequently $(L_{\mathrm{contr}}\otimes I_{r})\bar{\eta}^{w} = 0$. The controller \eqref{eqn.ControllerAugmented} satisfies Assumption 3a, since the directional derivative of $W = \frac{1}{2}(\bar{\eta} - \bar{\eta}^{w})^{T}(\bar{\eta} - \bar{\eta}^{w})$ is
\begin{align*}
\dot{W} \leq - (\eta - \eta^{w})^{T}(L_{\mathrm{comm}}\otimes I_{r}) (\eta - \eta^{w}) + (\lambda - \lambda^{w})^{T}v.
\end{align*}
Mirroring the proof of Theorem \ref{th1}, the derivative of the storage function $U((x,x^{w}),(\eta,\eta^{w}))$ satisfies 
\begin{align*}
\dot{U} \leq -\|B^{T} \otimes I_{p})y\|^{2} - \eta^{T}(L_{\mathrm{comm}} \otimes I_{r})\eta.
\end{align*}
Thus, with the same arguments as in the proof of Theorem \ref{th1}, convergence can be concluded. 
Additionally, this proves that $\lim_{t\to \infty}  \|\eta_k(t)-\eta_j(t)\|=0$ for all $k\ne j$. 
\end{proof}

\section{Relation to known results} \label{sec.Relation}

Next, we compare our results to known results.

\subsection{Static Couplings}

The papers \cite{Stan2007}, \cite{Scardovi2010} study synchronization of cocoercive (or semi-passive(\cite{AP-HN-TCSI01})) systems  that are free from disturbances with purely {\em static} output feedback that uses relative measurements. 
The result extend to systems that have a ``shortage'' of incremental passivity, called  relaxed cocoercive  systems.\footnote{Relaxed cocoercive systems are related to QUAD systems \cite{DeLellis2011}. Any QUAD system, augmented with an additive input on each state and the output being the full state vector, is also relaxed cocoercive.}
For \emph{identical} relaxed cocoercive systems it was shown in \cite{Scardovi2010} (see also \cite{DeLellis2011}) that \emph{static} couplings suffice to ensure synchronization, provided that the network features a sufficiently strong coupling.  \\
Here, \emph{heterogeneous} incrementally passive systems affected by disturbances are considered. 
The heterogeneity of the systems and the time varying external disturbances cause the need for dynamic couplings. However, if all systems already share a common \emph{internal model},  static couplings are also sufficient in our approach.
\begin{proposition}[Static Coupling]\label{prop.StaticCoupling}
Consider the system (\ref{nonl.systems.tot}) and suppose all node dynamics are incrementally passive. 
If there exists a solution to the regulator equations  \eqref{regulator.equations} with $u^{w}(t) = 0$\footnote{This holds in particular if all systems (and exosystems) are identical. More generally, it asserts that all systems incorporate the \emph{same} internal model. },
then, the \emph{static} controller
$
\lambda = \nu
$
with the interconnection $u=(B\otimes I_p) \lambda,$ and $\nu=-(B^T\otimes I_p) y$ solves the output agreement problem. 
\end{proposition}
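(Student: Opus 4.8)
The plan is to specialize the proof of Theorem~\ref{th1} to the case $u^w = 0$, which forces the controller part to be trivial. First I would observe that the hypothesis $u^w(t)=0$ for a solution of the regulator equations \eqref{regulator.equations} means we may take $x^w$ to be a solution of the \emph{unforced} node dynamics $\dot x^w = f(x^w,0,w)$ satisfying $(B\otimes I_p)^T h(x^w,w)=\mathbf 0$, i.e.\ the systems agree on their outputs at steady state \emph{without} any control input. In the controller-independent form \eqref{eqn.regulator.equations.controller.independent}, this is precisely the statement that $u^w\in\mc R(B\otimes I_p)$ (trivially, since it is zero) and $y^w\in\mc N(B^T\otimes I_p)$. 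Because no steady-state input is required, there is no need for an internal model: the static controller $\lambda=\nu=-z$ suffices, and the whole internal-model machinery (the $\eta$-dynamics, Assumption~\ref{a.ip.im}, the storage function $W$) drops out.

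The core of the argument is then a single incremental-passivity estimate. I would take $V=\sum_i V_i$, the sum of the regular storage functions guaranteed by incremental passivity of the node dynamics (Definition~\ref{pip}), and compare the actual trajectory $(x,u,y)$ with the steady-state trajectory $(x^w,u^w,y^w)=(x^w,0,y^w)$. Incremental passivity gives
\[
\dot V \;\le\; (y-y^w)^T(u-u^w)\;=\;(y-y^w)^T u,
\]
using $u^w=0$. Now I substitute the interconnection relations $u=(B\otimes I_p)\lambda$ with $\lambda=\nu=-z=-(B^T\otimes I_p)y$, so $u=-(B\otimes I_p)(B^T\otimes I_p)y$. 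Since $y^w\in\mc N(B^T\otimes I_p)$, we have $(B^T\otimes I_p)y^w=\mathbf 0$, hence $(B^T\otimes I_p)(y-y^w)=(B^T\otimes I_p)y=z$. Therefore
\[
(y-y^w)^T u \;=\; -\bigl((B^T\otimes I_p)(y-y^w)\bigr)^T(B^T\otimes I_p)y \;=\; -z^Tz,
\]
which yields the clean dissipation inequality $\dot V\le -z^Tz=-\|(B^T\otimes I_p)y\|^2$, exactly mirroring \eqref{eqn.LyapDer} but with $V$ alone in place of $U$.

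From here the conclusion follows by the same boundedness-plus-Barbalat argument as in Theorem~\ref{th1}. Since $V$ is non-negative and non-increasing it is bounded; because $x^w$ is bounded (it lies in the compact $\omega$-limit set) and $V$ is regular, boundedness of $V$ forces $x$ to be bounded, so solutions are complete. Integrating gives $\int_0^\infty z^T(s)z(s)\,ds\le V(0)<\infty$, and boundedness of $x,w$ (hence of $z$ and, via $\dot x=f(x,-(B\otimes I_p)(B^T\otimes I_p)y,w)$ and $\dot w=s(w)$, of $\dot z$) makes $\tfrac{d}{dt}z^Tz$ bounded, so Barbalat's lemma delivers $z(t)\to\mathbf 0$. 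I expect the only genuine subtlety to be justifying that a steady-state trajectory with $u^w=0$ actually exists and is bounded so that $V$ can be evaluated along the pair $(x,x^w)$ — but this is exactly what the hypothesis on the regulator equations \eqref{regulator.equations} supplies, and boundedness of $x^w$ comes for free from its membership in the compact $\omega$-limit set, just as footnoted in Theorem~\ref{th1}. The rest is a direct, almost verbatim, specialization of the theorem's proof with the controller storage function $W$ removed.
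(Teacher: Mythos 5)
Your proposal is correct and follows essentially the same route as the paper: sum the node storage functions, use incremental passivity against the zero-input steady state to get $\dot V \le (y-y^w)^T u$, substitute the Laplacian coupling to obtain $\dot V \le -\|(B^T\otimes I_p)y\|^2$ (using $(B^T\otimes I_p)y^w=\mathbf 0$), and conclude by the boundedness-plus-Barbalat argument of Theorem~\ref{th1}. Your write-up is in fact slightly more careful than the paper's, which contains a sign slip in the displayed dissipation inequality.
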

\begin{proof}
By the incremental passivity property of the subsystems it is true that 
$ \dot{V}  \leq (y - y^{w})^{T}(u-u^{w}),  $
where $V = \sum_{i=1}^{n}V_{i}$. Now, since $u^{w} = 0$ and $y^{w} \in \mc{N}((B \otimes I_{p})^{T})$ the coupling $u = - Ly = - (B\otimes I_{p})(B \otimes I_{p})^{T} y = - (B\otimes I_{p})(B \otimes I_{p})^{T} (y - y^{w})$ gives
$ \dot{V}  \leq (y - y^{w})^{T}L (y - y^{w}).$
Convergence and boundedness can now be shown as in the proof of Theorem \ref{th1}.
\end{proof}

Please note that the input to the systems computes in this case as
\[
u = - (L \otimes I_{p})y,
\]
where $L = BB^{T}$ is the \emph{Laplacian matrix} of the (undirected) graph. Thus, for homogeneous systems, our controller design method reduces to the well-known Laplacian coupling, as studied, e.g., in \cite{Stan2007}, \cite{Scardovi2010}, \cite{DeLellis2011}.
However, it should be remarked that, while incrementally passive systems strictly include the class of cocoercive systems, our results do not appear to be trivially extendable to the class of relaxed cocoercive systems.  Moreover, while the results of \cite{Stan2007}, \cite{Scardovi2010} apply to networked systems over a  balanced, directed graph (possibly even time-varying), our results are given for static undirected graphs.

\subsection{Static Disturbances} \label{sec.StaticDist}

The output agreement problem with \emph{constant} disturbances deserves particular attention. 
Control of passive system with constant disturbances is studied, e.g., in \cite{Jayawardhana2007} or \cite{Hines2011}, where the notion of equilibrium independent passivity is introduced. Equilibrium independent passivity is closely related to incremental passivity, i.e., it is defined by a condition similar to \eqref{di.x} assuming that one of the trajectories (e.g., $x'$) is  an equilibrium trajectory. Optimality properties and a network theoretic interpretation of networks of equilibrium independent passive systems are discussed in \cite{Burger2013}, \cite{Burger2013a}. 
The stability of passive networks with static disturbance signals has also been discussed in \cite{vanderSchaft2012}. We derive here slightly more general\footnote{In fact, differently from \cite{vanderSchaft2012}, \cite{Burger2013}, \cite{Burger2013a}, we do not assume output strict incremental passivity but only incremental passivity.} controllers (or dynamic couplings) as \cite{Burger2013}, \cite{Burger2013a}, using the internal model control approach.

\begin{proposition}

Consider the network $\mc{G}$ with dynamics on the nodes  \eqref{nonl.systems.tot}.
Suppose $w_{i}$ is some constant signal, i.e., $s_{i}(w_{i}) = 0$, the regulator equations \eqref{regulator.equations} hold and \eqref{nonl.systems.tot} are incrementally passive. Then, any controller of the form 
\begin{align}
\begin{split} \label{sys.ControllerStatic}
\dot{\eta}_{k} &= v_{k} \\
\lambda_{k} &= \psi_{k}(\eta_{k}) + \nu_{k}, \; k \in \{1,\ldots,m\}
\end{split}
\end{align} 
with $\psi_{k}(\cdot)$ satisfying the strong monotonicity condition
\begin{align}\label{eqn.Monotonicity}
(\psi_{k}(\eta) - \psi_{k}(\eta'))^{T}(\eta - \eta') \geq c\|\eta - \eta'\|^{2}, \quad \forall \eta, \eta'
\end{align}
for some positive constant $c$, and interconnection constraints \eqref{int.constr} solves the output agreement problem. 
\end{proposition}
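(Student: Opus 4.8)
The plan is to follow the proof of Theorem~\ref{th1} almost verbatim, the only genuinely new work being the passivity analysis of the integrator-type controller \eqref{sys.ControllerStatic}. First I would fix a solution of the regulator equations \eqref{regulator.equations}. Since $s_i(w_i)=0$, both $w$ and the associated steady state $(x^w,u^w)$ are constant, and $y^w\in\mc{N}(B^T\otimes I_p)$ gives $v^w=-(B^T\otimes I_p)y^w=0$; the internal model is the trivial one $\dot\eta_k=0$, consistent with $s_i(w_i)=0$. To realize $u^w\in\mc{R}(B\otimes I_p)$ with this particular controller I need a constant $\eta^w$ with $(B\otimes I_p)\psi(\eta^w)=u^w$, and here the strong monotonicity \eqref{eqn.Monotonicity} already pays off: a continuous strongly monotone $\psi_k$ is a bijection of $\mathbb{R}^p$ (Browder--Minty), so $\psi$ is a bijection and $\eta^w=\psi^{-1}(\lambda^w)$ exists for any $\lambda^w$ with $(B\otimes I_p)\lambda^w=u^w$; correspondingly $\nu^w=0$ and $\lambda^w=\psi(\eta^w)$. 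The node subsystem, being incrementally passive, then contributes $\dot V\le(y-y^w)^T(u-u^w)$ with $V=\sum_i V_i$, exactly as in Theorem~\ref{th1}.

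The crux is to show that the map $v_k\mapsto\psi_k(\eta_k)$ realized by $\dot\eta_k=v_k$ is incrementally passive relative to the steady state, i.e.\ satisfies Assumption~3a with a \emph{regular} storage function. The natural candidate is a Bregman-type storage $W_k(\eta_k,\eta_k^w)$ whose gradient in the first argument equals $\psi_k(\eta_k)-\psi_k(\eta_k^w)$; along the integrator this gives, with the feedthrough switched off, $\dot W_k=(\psi_k(\eta_k)-\psi_k(\eta_k^w))^Tv_k=(\lambda_k-\lambda_k^w)^Tv_k$, which is precisely the inequality of Assumption~3a (here an equality). Strong monotonicity does the decisive work a second time: it makes the underlying potential strongly convex, so that $W_k\ge\tfrac{c}{2}\|\eta_k-\eta_k^w\|^2$ is nonnegative and radially unbounded, hence regular, which is what later yields boundedness of $\eta_k$. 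I expect this step to be the main obstacle, since the clean construction presupposes that $\psi_k$ is a gradient field; I would either argue that this is the relevant case or replace $W_k$ by an equivalent incremental storage, being careful to pin down exactly where strong monotonicity (rather than mere monotonicity) is essential.

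With both dissipation inequalities available, I would take the feedthrough to be $\nu=v=-(B^T\otimes I_p)y=:-z$, exactly as in Theorem~\ref{th1}. Summing $U=V+W$, substituting the interconnection \eqref{int.constr} and using $(B^T\otimes I_p)y^w=0$, the node--controller cross terms cancel and the extra feedthrough contributes $-\|z\|^2$, so that $\dot U\le-z^Tz\le0$. Because $U$ is nonnegative, nonincreasing and regular while $x^w,\eta^w$ are bounded, $x$ and $\eta$ remain bounded and solutions are complete; integrating gives $\int_0^{\infty}z^T(s)z(s)\,ds\le U(0)$. Finally, since $w$ is constant and $x,\eta,z$ are bounded, $\dot x$ and hence $\dot z$ are bounded, so Barbalat's lemma yields $z(t)=(B^T\otimes I_p)y(t)\to\mathbf{0}$, i.e.\ output agreement. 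This last bookkeeping is identical to the corresponding passage of Theorem~\ref{th1}, and I would refer back to it rather than repeat it.
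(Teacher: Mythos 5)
Your proposal follows the paper's own proof essentially verbatim: the paper likewise selects a constant $\eta^{w}$ solving $\psi(\eta^{w})=\lambda^{w}_{p}+\lambda^{w}_{0}$ (using strong monotonicity for uniqueness of $\tau(w)$), builds the Bregman storage $W_k(\eta_k,\eta_k^w)=\Psi_k(\eta_k)-\Psi_k(\eta_k^w)-\nabla\Psi_k^T(\eta_k^w)(\eta_k-\eta_k^w)$ with $\nabla\Psi_k=\psi_k$ to verify Assumption 3a, and then repeats the Barbalat argument of Theorem~\ref{th1} with $\nu=-z$. The one caveat you raise --- that the Bregman construction presupposes $\psi_k$ is a gradient field --- is implicitly assumed by the paper as well, which simply posits a twice continuously differentiable $\Psi_k$ with $\nabla\Psi_k=\psi_k$; your Browder--Minty surjectivity remark for the existence of $\eta^{w}$ is a small addition the paper leaves tacit.
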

Note that the controller \eqref{sys.ControllerStatic} is not necessarily incrementally passive, i.e., Assumption \ref{a.ip.im} is not met. However, we will show next that the controllers satisfy the weaker Assumption 3a. 

\begin{proof}
Let $x^{w}$ and $u^{w}$ be solutions to the regulator equations \eqref{regulator.equations}. By the structure of \eqref{regulator.equations} follows immediately that $v^{w} = -(B\otimes I_{p})^{T}h(x^{w},w) = 0$.
Since the disturbance is static, i.e., $\dot{w} = 0$, the conditions \eqref{embedding}  are solved with $\phi(\cdot) = 0$ and $\tau(w)$ such that
\begin{align} \label{eqn.StaticEmb}
 \lambda_{p}^{w} + \lambda_{0}^{w} = \psi(\tau(w)) 
\end{align}
for some $\lambda_{p}^{w}$ satisfying $u^{w} = (B \otimes I_{p})\lambda_{p}^{w}$ and some $\lambda_{0}^{w}  \in \mc{N}\left( B \otimes I_{p} \right)$  and constant. 
Thus, there is not a unique solution to \eqref{embedding}, but rather for any $\lambda_{0}^{w} \in \mc{N}(B\otimes I_{p})$ there exists exactly one $\tau(w)$ solving \eqref{embedding} (the existence of more than one solution $\tau(w)$ would contradict the strong monotonicity condition (\ref{eqn.Monotonicity})).
Select now $\eta^{w} = \tau(w)$ as a solution to \eqref{eqn.StaticEmb} an \emph{arbitrary} $\lambda_{0}^{w} \in \mc{N}(B \otimes I_{p})$, and let $\lambda^{w} = \lambda_{p}^{w} + \lambda_{0}^{w}$. 
\\
One can construct now for each controller a storage function $W_k$ that satisfies Assumption 3a, i.e., that shows passivity with respect to the constant signals $\lambda_{k}^{w}$ and $v_{k}^{w}=0$.
Let in the following $\Psi_k : \R^q \to \R$ be a twice continuously differentiable function such that $\nabla \Psi_k(\eta_k)= \psi_k(\eta_k)$. Since, by assumption, $\psi_{k}$ satisfy the monotonicity condition \eqref{eqn.Monotonicity} all $\Psi_{k}$ are strongly convex. 
Consider now the following storage function (\cite{Jayawardhana2007}, \cite{Burger2013}):
\begin{align} \label{eqn.Bregman}
\begin{split}
W_k(\eta_k,\eta_k^w)=  \Psi_k(\eta_k)- \Psi_k(\eta_k^w) - 
\nabla \Psi_k^T(\eta_k^w)(\eta_k-\eta_k^w).
\end{split}
\end{align}
Since $\Psi_k$ is convex and, by the global under-estimator property of the gradient, we have
$
\Psi_k(\eta_k)\ge  \Psi_k(\eta_k^w)+
\nabla \Psi_k^T(\eta_k^w)(\eta_k-\eta_k^w)
$
for each $\eta_k, \eta_k^w$. 
 Since $\Psi_k$ is strongly convex, then it is in particular strictly convex and the previous inequality holds if and only if $\eta_k=\eta_k^w$. Then $W_k$ is regular (\cite{Jayawardhana2007}). 
Hence, $W_k$ is a positive regular storage function.
 Furthermore,
\begin{align*}
 \dst\frac{\partial W_k}{\partial \eta_k} \phi_k(\eta_k, v_k) &=
 (\psi_k(\eta_k)-\psi_k(\eta_k^w))^T v_k \\
 &= (\lambda_k-\lambda_k^w)^T v_k.
 \end{align*}
In the case of constant disturbances Assumption 3a is always fulfilled by controllers of the form \eqref{sys.ControllerStatic}.
Mimicking now the proof of Theorem \ref{th1}, using the storage function $U((x,x^{w}),(\eta,\eta^{w})) = V(x,x^{w}) + W(\eta,\eta^{w})$, with $ W(\eta,\eta^{w}) = \sum_{k=1}^{m}W_{k}(\eta_{k},\eta_{k}^{w})$, one obtains
$ \dot{U} \leq -\|(B \otimes I_{p})^{T}y\|^{2}. $
With the same arguments as used in the proof of Theorem \ref{th1}, it follows that the controller \eqref{sys.ControllerStatic} solves the output agreement problem in the case of static disturbances, that is $\lim_{t \rightarrow \infty} (B \otimes I_{p})^{T}y(t) = \0$.

\end{proof}

\section{Time-varying Optimal Distribution Control} \label{sec.OptimalDistribution}

We use now the output agreement theory for the design of (optimal) distribution control laws in distribution networks with storage.

Consider an inventory system with $n$ inventories and $m$ transportation lines, and let $B$ be the incidence matrix of the transportation network. The dynamics of the inventory system is given as
\begin{align}
\begin{split} \label{sys.Inventory1}
\dot{x} &= B\lambda + Pw,
\end{split}
\end{align}
where $x \in \mathbb{R}^{n}$ represents the storage level, $\lambda \in \mathbb{R}^{m}$ the flow along one line, and $Pw$ an external in-/outflow of the inventories, i.e., the supply or demand. 
This basic model is studied, e.g., in \cite{Bauso2006}, \cite{vanderSchaft2012} or in a discrete-time form in \cite{Baric2012}, \cite{Danielson2013}.

We assume here that the exact realization of the supply/demand is unknown, while it is known that it is generated by the dynamics
\begin{align}\label{sys.Inventory_w} \dot{w} = s(w). \end{align}
The \emph{distribution and balancing problem} is to design controllers on the edges of the network, using only measurements of the storage levels of the incident inventories and regulating the flows $\lambda_{k}$ such that instantaneously all possible supply/demand is satisfied and all inventory levels evolve synchronously. 
The balancing problem has been recently studied in \cite{Danielson2013} using predictive control.
By choosing $u = B\lambda$, the problem can be readily formulated as an output agreement problem with time varying disturbance. 

The regulator equations \eqref{eqn.regulator.equations.controller.independent} for the distribution problem are
\begin{align} \label{eqn.FlowRE}
\begin{split}
&\dot{x}^{w} (t)= u^{w}(t) + Pw(t),\\
& u^{w}(t) \in \mc{R}(B), \; x^{w}(t) \in \mc{N}(B^{T}).
\end{split}
\end{align}
The solution to the regulator equation \eqref{eqn.regulator.equations.controller.independent} is
\begin{align}\label{sol.re}
x^{w}(t) = \1_{n}\Bigl(x_{0}^{w} + \int_{0}^{t} \frac{\1_{n}^{T}Pw(s) }{n} ds\Bigr)
\end{align}
where  $x^w_0$ belongs to the projection of $\omega(W\times X\times \Xi)$ onto $\R^{r_1+\ldots+r_n}$.
To see (\ref{sol.re}), note that $u^{w}(t) \in \mc{R}(B)  \Leftrightarrow \1_{n}^{T}u^{w}(t) = 0$. Let now $x^{w}(t) = \1_{n}x^{w}_{*}(t)$, for some $x^{w}_{*}(t) \in \mathbb{R}$. Then, multiplying \eqref{eqn.FlowRE} from the left with the all ones vector gives
$ n\dot{x}^{w} = \1^{T}Pw(t),$ leading to the desired expression.  
The following observation is now a direct consequence. 
\begin{proposition}
The output agreement problem is feasible only if the accumulated imbalance $\bar{w}(t) =  \int_{0}^{t} \frac{\1_{n}^{T}Pw(s) }{n} ds$ is bounded for all $t \geq 0$. 
\end{proposition}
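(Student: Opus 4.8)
The plan is to derive the bound directly from the necessary conditions for solvability established in Section~\ref{subs.nc}, combined with the explicit form \eqref{sol.re} of the steady-state trajectory. The inventory system \eqref{sys.Inventory1} is an instance of \eqref{nonl.systems.tot} under the choice $u = B\lambda$, so those necessary conditions apply verbatim: if the output agreement problem is solvable, then for every exosystem solution $w$ originating in $\mc{W}$ there exists a solution $(x^w, u^w)$ of the regulator equations, which here specialize to \eqref{eqn.FlowRE}.

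The crucial structural observation is that this steady-state trajectory $x^w$ does not merely solve \eqref{eqn.FlowRE} formally, but is the $x$-component of a trajectory $(w(t), x^w(t), \xi^w(t))$ evolving inside the $\omega$-limit set $\Omega(\mc{W}\times\mc{X}\times\Xi)$. Solvability forces $\Omega$ to be nonempty, compact, and invariant; by invariance this trajectory remains in $\Omega$ for all $t$, and by compactness its $x$-coordinate is bounded. Hence $x^w(t)$ is bounded uniformly in $t$.

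First I would combine this with \eqref{sol.re}, which asserts that every solution of \eqref{eqn.FlowRE} has the form $x^w(t) = \1_n\bigl(x_0^w + \bar{w}(t)\bigr)$ with $x_0^w$ constant. Since $\|\1_n\| = \sqrt{n} \neq 0$, boundedness of $x^w(t)$ is equivalent to boundedness of the scalar quantity $x_0^w + \bar{w}(t)$, and as $x_0^w$ is a constant this is in turn equivalent to boundedness of $\bar{w}(t)$. Chaining the implications, solvability forces $\bar{w}$ to be bounded, which is precisely the asserted necessary condition. Equivalently, if $\bar{w}$ is unbounded then no admissible steady-state trajectory can remain in a compact invariant set, so the problem cannot be solved.

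The main obstacle is conceptual rather than computational: one must argue carefully that the steady-state trajectory supplied by the necessary conditions genuinely lives in the compact $\omega$-limit set, so that its boundedness is automatic, rather than treating it as an abstract solution of \eqref{eqn.FlowRE} that might escape to infinity. Once this identification is secured, the explicit formula \eqref{sol.re} and the non-vanishing of $\1_n$ make the remaining steps immediate.
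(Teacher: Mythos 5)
Your argument is correct and is essentially the paper's own reasoning: the paper's justification is the one-line remark that otherwise the inventory levels $x^{w}$ would grow unbounded, which is exactly your chain of (i) $x^{w}$ must stay in the compact $\omega$-limit set (equivalently, solutions must be bounded per Definition~\ref{def.OutputAgreement}), and (ii) formula \eqref{sol.re} shows $x^{w}$ is bounded if and only if $\bar{w}$ is. You merely make explicit the identification of the steady-state trajectory with a trajectory in $\Omega(\mc{W}\times\mc{X}\times\Xi)$ that the paper leaves implicit.
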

Otherwise the inventory levels (i.e., $x^{w}$) will grow unbounded. 
The corresponding input is naturally given as 
$u^{w}(t) = -\Delta_{n} Pw(t),$
with $\Delta_{n} = (I_{n}- \frac{1}{n}\1_{n}\1_{n}^{T})$, namely the projection of the supply/demand vector to the space orthogonal to $\mathrm{span}\{\1_{n}\}$. 
Next, we verify that the necessary conditions for the output agreement problem are satisfied by showing feasibility of \eqref{embedding}. Note that the controller output must satisfy
$
 B\lambda^{w}=-(I_{n}- \frac{1}{n}\1_{n}\1_{n}^{T})Pw(t).
$

\begin{proposition} \label{prop.FlowFeasibility}
If the network contains a spanning tree then the condition \eqref{embedding} is feasible. 
\end{proposition}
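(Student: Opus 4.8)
The plan is to verify feasibility by exhibiting explicitly an integer $d$ and maps $\tau,\phi,\psi$ satisfying the embedding \eqref{embedding}, exploiting that the required steady-state input has already been identified as $u^{w}(t) = -\Delta_{n}Pw(t)$ and that the controller output must satisfy $B\lambda^{w} = u^{w}$. The whole point of the hypothesis is to guarantee that this linear equation is pointwise solvable by a trajectory that a finite-dimensional copy of the exosystem can reproduce.

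First I would extract from the spanning-tree assumption the only structural fact needed: since $\mc{G}$ is undirected, containing a spanning tree is equivalent to connectivity, which forces $\mathrm{rank}(B) = n-1$ and hence $\mc{R}(B) = \mc{N}(\1_{n}^{T}) = \mathrm{span}\{\1_{n}\}^{\perp}$. Because $\Delta_{n} = I_{n} - \tfrac{1}{n}\1_{n}\1_{n}^{T}$ is exactly the orthogonal projection onto $\mathrm{span}\{\1_{n}\}^{\perp}$, we get $\1_{n}^{T}u^{w}(t) = 0$ for every $t$, so that $u^{w}(t) \in \mc{R}(B)$ at all times and $B\lambda^{w} = u^{w}$ is solvable. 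Concretely, $BB^{\dagger}$ is the orthogonal projector onto $\mc{R}(B)$, whence $BB^{\dagger}u^{w} = u^{w}$, and I would take the particular solution $\lambda^{w} = B^{\dagger}u^{w}$ (the tree interpretation being that all flow is routed through the $n-1$ tree edges while the cycle-edge flows vanish). Since $u^{w} = -\Delta_{n}Pw$ is linear in $w$, this gives $\lambda^{w} = Kw$ with the constant matrix $K := -B^{\dagger}\Delta_{n}P$, a bona fide linear function of $w$.

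Finally I would close the embedding by choosing $d = q$, $\tau = \mathrm{Id}$, $\phi = s$ and $\psi(\eta) = K\eta$. The first equation of \eqref{embedding} then reads $\tfrac{\partial \tau}{\partial w}s(w) = s(w) = \phi(\tau(w))$, which holds trivially, and $\psi(\tau(w)) = Kw = \lambda^{w}$. Since $B\lambda^{w} = u^{w} = B\lambda_{p}^{w}$, the difference $\lambda_{0}^{w} := \lambda^{w} - \lambda_{p}^{w}$ lies in $\mc{N}(B)$, so $\lambda_{p}^{w} + \lambda_{0}^{w} = \psi(\tau(w))$, which is precisely the second equation of \eqref{embedding}. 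The internal model \eqref{internal.model} is then $\dot\eta = s(\eta)$, $\lambda = K\eta$, i.e.\ a copy of the exosystem read out through $K$.

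The main obstacle is conceptual rather than computational: it is the single step $u^{w} \in \mc{R}(B)$, which is where the spanning-tree/connectivity hypothesis is genuinely used, since without it $\mc{R}(B)$ would be strictly smaller than $\{\1_{n}\}^{\perp}$ and a constant in-/outflow with nonzero average over some disconnected component could not be routed. A second point deserving care is that $\psi$ is linear while $s$ may be nonlinear; this causes no difficulty precisely because $\tau$ is the identity, which decouples the (possibly nonlinear) dynamics condition $\phi = s$ from the linear output condition, so that the same copy of the exosystem generates the linear feedforward signal $\lambda^{w}$ for every admissible $w$.
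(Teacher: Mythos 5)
Your proof is correct and follows essentially the same route as the paper: both reduce \eqref{embedding} to exhibiting a linear feedforward $\lambda^{w}=Kw$ and then take $\tau=\mathrm{Id}$, $\phi=s$, $\psi$ linear, the only difference being that the paper picks the particular solution supported on the spanning-tree edges, $\lambda_{\mc{T}}^{w}=-(B_{\mc{T}}^{T}B_{\mc{T}})^{-1}B_{\mc{T}}^{T}Pw$, $\lambda_{\bar{\mc{T}}}^{w}=0$, whereas you take the minimum-norm solution $B^{\dagger}u^{w}$. One small slip: your parenthetical claim that ``all flow is routed through the $n-1$ tree edges while the cycle-edge flows vanish'' describes the paper's choice rather than $B^{\dagger}u^{w}$, which in general spreads flow over cycle edges as well; since any two particular solutions differ by an element of $\mc{N}(B)$, which is absorbed by the $\lambda_{0}^{w}$ term in \eqref{embedding}, this is inconsequential.
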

\begin{proof}
Let $\mc{T} \subseteq \mc{G}$ be a spanning tree. Assume without loss of generality that the edges are labeled in such a way that the flow vector can be written as $\lambda^{w} = [\lambda_{\mc{T}}^{wT}, \lambda_{\bar{\mc{T}}}^{wT}]^{T}$, where $\lambda_{\mc{T}}^{w} $ are the flows on the edges in $\mc{T}$ and $\lambda_{\bar{\mc{T}}}^{w}$ are the flows in all other edges. Similarly, the incidence matrix can be represented as $B = [B_{\mc{T}}, B_{\bar{\mc{T}}}]$. 
A feasible flow solution $\lambda_{p}^{w}$ can now be chosen as $\lambda_{\bar{\mc{T}}}^{w} = 0$ and $\lambda_{\mc{T}}^{w}  =-(B_{\mc{T}}^{T}B_{\mc{T}})^{-1}B^{T}_{\mc{T}}Pw(t)$.
Note that $\lambda_{\mc{T}}^{w}$ routes exactly the balanced component of the supply/demand through the network, since $ \lambda_{\mc{T}}^{w}  =-(B_{\mc{T}}^{T}B_{\mc{T}})^{-1}B^{T}_{\mc{T}}(I_{n} - \frac{1}{n}\1_{n}\1_{n}^{T})Pw(t)$ since $B^{T}_{\mc{T}}\1_{n} = 0_{n}$.
Define now 
\[H =  \begin{bmatrix} -(B_{\mc{T}}^{T}B_{\mc{T}})^{-1}B^{T}_{\mc{T}}P \\0 \end{bmatrix},\]
 and note that $\lambda^{w}_{p} = Hw$. Thus,  $\tau = Id$, $\phi(\cdot) = s(\cdot)$ and $\psi$ being the linear function defined by $H$, solve \eqref{embedding}.
\end{proof}

After augmenting the controller with external outputs, a possible routing controller is 
\begin{align}  \label{eqn.RoutingController}
\begin{split}
\dot{\eta} &= s(\eta) + H^{T}v \\
\lambda &= H\eta + \nu.
\end{split}
\end{align}
Note that if $s(\cdot)$ satisfies the standing assumption \eqref{eqn.incrementalProp}, then this controller is incrementally passive.
 
\subsection{Optimal Distribution Control}

We enlarge our control objective and aim to design a feedback controller that achieves an \emph{optimal} routing.
That is, we want to regulate the flows such that they minimize the quadratic cost function 
  \begin{align} \label{eqn.QuadraticCost}
\mc{P}(\lambda) = \frac{1}{2}\lambda^{T}Q\lambda,
\end{align}
with $Q = \mbox{diag}(q_{1},\ldots,q_{m})$ and $q_{k} >0$.  \\
We exploit therefore that the internal model controller achieving balancing is not unique. In particular, we redesign the controller \eqref{eqn.RoutingController} in such a way that it routes the balanced component of the flow through the network in such a way that at each time instant the cost \eqref{eqn.QuadraticCost} is minimized. That is, asymptotically the routing should be such that at each time instant $t$ the following static optimization problem is solved
 \begin{align} \label{prob.OptFlow}
 \min_{\lambda} \mc{P}(\lambda), \quad \mathrm{s.t.} \; \0 = B\lambda + \Delta_{n} Pw,
 \end{align} 
 where $w = w(t)$ is the supply at the respective time. 
 Let now $\zeta \in \mathbb{R}^{n}$ be the multiplier for the equality constraint. The Lagrangian function of \eqref{prob.OptFlow} is
 \begin{align*}
 \mc{L}(\lambda,\zeta) = \frac{1}{2}\lambda^{T}Q\lambda + \zeta^{T}(B\lambda + \Delta_{n} Pw).
 \end{align*}
 One can express the optimality conditions in terms of the dual solution as
 \[
 Q\lambda +B^T \zeta=\mathbf{0},\quad B\lambda+\Delta_n Pw=\mathbf{0},
 \]
from which $B(Q^{-1}B^{T}\zeta) + \Delta_{n} Pw = \mathbf{0}$,
 with the optimal routing being $\lambda = Q^{-1}B^{T}\zeta$. 
Thus the optimal routing/supply pairs are defined as the set
 \[
 \Gamma = \{(\lambda,w)\,:\, Q\lambda\in \mathcal{R}(B^T),\, B\lambda+\Delta_n Pw=\mathbf{0}\}.
 \]
We formalize the optimal distribution problem as follows:
\begin{definition}
The time-varying optimal distribution problem is solvable for the system \eqref{sys.Inventory1}, if there exists a controller \eqref{im.big} such that any solution originating from $\mathcal{W}\times \mathcal{X}\times \Xi$ satisfies (i) $\lim_{t\to \infty} B^T x(t)=\mathbf{0}$ and (ii) $\lim_{t\to \infty} {\rm dist}_\Gamma(\lambda(t), w(t))=0$.    
\end{definition} 
To solve the problem, we proceed in this way. Instead of designing the controller directly for the flows, we design the controllers for the \emph{multipliers}. We take $\tau = Id$ and $\phi(\cdot) = s(\cdot)$ and design a controller of the form
 \begin{align*}
 \begin{split}
 \dot{\eta} &= s(\eta) + H_{v}v \\
 \zeta &= H_{\zeta} \eta,
 \end{split}
 \end{align*}
 where $H_{\zeta}$ and $H_{v}$ are suitable input and output matrices to be designed next. The routing will then be defined as $\lambda(t) = Q^{-1}B^{T}\zeta(t)$.
 For designing $H_{\zeta}$, note that, 
 provided that $v=\mathbf{0}$ and the initial condition $\eta(0)$ is properly chosen, the system above generates the solution $\eta^w(t)=w(t)$. 
Then  $H_{\zeta}$ must be design in such a way that $\zeta^w(t)=H_{\zeta}\eta^w(t)$ satisfies
 the optimality condition
\begin{align*}
BQ^{-1}B^{T} H_{\zeta}\eta^{w}(t) + \Delta_{n}Pw(t) = 0.
\end{align*}
The matrix $L_{Q} = BQ^{-1}B^{T}$ is a weighted Laplacian matrix. As $L_{Q}$ has one eigenvalue at zero, with the corresponding eigenvector $\1$, it is not invertible. 
However, since $\eta^{w}(t) = w(t)$, one possible solution is 
\begin{align} \label{eqn.QuadraticH}
H_{\zeta} = -L_{Q}^{\dagger}P,
\end{align}
where $L_{Q}^{\dagger}$ is the \emph{Moore-Penrose-inverse} of $L_{Q}$, see e.g., \cite{Gutman2004}.
From the properties of $L_Q^\dagger$ follows  
that 
$
B Q^{-1}B^{T}H_{\zeta}\eta^{w} + \Delta_{n}Pw  = - B Q^{-1} B^T L_Q^\dagger P\eta^{w}  + \Delta_{n}Pw  = -\Delta_{n} P \eta^{w} + \Delta_{n}P w = 0$ as desired.
Now, as the controller should be incrementally passive with input $v$ and output $\lambda$, we can design it in the form  
 \eqref{eqn.RoutingController}  taking
 \begin{align} \label{eqn.Hopt}
 H = Q^{-1}B^{T}H_{\zeta}.
 \end{align}
Then, to have incremental passivity, we simply choose $H_{v} = H^{T}$. This choice of the input and output matrix for the controller \eqref{eqn.RoutingController} ensures that the optimal distribution problem is solved.

\begin{proposition}
Consider the inventory system \eqref{sys.Inventory1} with the supply generated by the linear dynamics $\dot{w} = s(w)$, satisfying \eqref{eqn.incrementalProp}.
Consider the controller
\[\ba{rcl}
\dot \eta &=&s(\eta) - {H}^{T} z\\
\lambda  &=& H \eta -z.
\ea\]
with the interconnection condition
$ z = B^{T}x. $
Then, every solution of the closed-loop system is bounded and (i) $\lim_{t\to +\infty} B^{T}x = 0$, and (ii) $\lim_{t\to +\infty} {\dist_{\Gamma}}(\lambda(t), w(t)) = 0$, that is the time-varying optimal distribution problem is solvable.
\end{proposition}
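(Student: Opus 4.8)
The plan is to recognize this proposition as a direct instance of the incremental-passivity framework of Theorem \ref{th1}, augmented with an invariance argument to obtain the optimality claim (ii). First I would identify the relevant steady-state trajectories. The node dynamics $\dot x = B\lambda + Pw$ are the integrator special case of Proposition \ref{prop.IndenticalSys} (with $f_0 = 0$, $G = C = I$), hence incrementally passive with $V = \frac{1}{2}\|x - x^w\|^2$. The regulator equations \eqref{eqn.FlowRE} are solved by $x^w = \1_n x^{w}_{*}$ of \eqref{sol.re} and $u^w = -\Delta_n Pw$; the controller steady state is $\eta^w = w$, $\lambda^w = Hw$, and I would verify $B\lambda^w = u^w$ directly from the defining property of $H$ in \eqref{eqn.QuadraticH}--\eqref{eqn.Hopt}, namely $BQ^{-1}B^T L_Q^{\dagger} P w = \Delta_n Pw$. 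A crucial bookkeeping observation, used only at the very end, is that by the same construction $(\lambda^w(t), w(t))$ lies in the optimal set $\Gamma$ for every $t$: indeed $Q\lambda^w = B^T H_\zeta \eta^w \in \mc{R}(B^T)$ and $B\lambda^w + \Delta_n Pw = u^w + \Delta_n Pw = \0$.

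For boundedness and claim (i) I would mirror the proof of Theorem \ref{th1}. Writing $\tilde z = B^T x$ (note $z^w = B^T x^w = \0$ since $x^w \in \mc{N}(B^T)$) and $U = V + W$ with $W = \frac{1}{2}\|\eta - w\|^2$, the incremental property \eqref{eqn.incrementalProp} of $s$ gives $\dot W \le -\tilde z^T H(\eta - w)$, while $\dot V = \tilde z^T\bigl(H(\eta-w) - \tilde z\bigr)$, so that $\dot U \le -\|B^T x\|^2 = -\|z\|^2$. Since $U$ is regular and $x^w, \eta^w = w$ are bounded (here one invokes the feasibility hypothesis that the accumulated imbalance $\bar w$ is bounded, so that $x^w$ of \eqref{sol.re} is bounded), $U$ bounded yields boundedness of $x, \eta$; integrating and checking that $\dot z$ is bounded, Barbalat's lemma gives $z = B^T x \to \0$, which is (i).

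The main obstacle is claim (ii), because $\dot U \le -\|z\|^2$ controls only $z = B^T x$ and not the controller mismatch $\eta - w$ directly. Here I would invoke the $\omega$-limit set $\Omega$ of the closed loop, which is nonempty, compact, invariant and attracts the trajectory. Since $z(t) \to \0$, by continuity $z \equiv \0$ on $\Omega$, and by invariance $B^T \dot x \equiv \0$ along any trajectory in $\Omega$; with $z = 0$ one has $\lambda = H\eta$ and $\dot\eta = s(\eta)$ there, so $B^T B H\eta + B^T Pw = \0$. Substituting $H\eta = Hw + H(\eta - w)$ and $BHw = -\Delta_n Pw$, and using $B^T\1_n = \0$, this collapses to $B^T B H(\eta - w) = \0$. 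Then $BH(\eta-w) \in \mc{R}(B)\cap\mc{N}(B^T) = \{\0\}$ forces $BH(\eta - w) = \0$; combined with $H(\eta-w) \in Q^{-1}\mc{R}(B^T)$ (since $H = Q^{-1}B^T H_\zeta$) and $Q = Q^T > 0$, a one-line quadratic-form argument ($y := H(\eta-w)$ satisfies $By = \0$ and $Qy = B^T\mu$, whence $y^TQy = (By)^T\mu = 0$) gives $H(\eta - w) = \0$, hence $\lambda = Hw = \lambda^w$ on $\Omega$. Thus $(\lambda, w) = (\lambda^w, w) \in \Gamma$, i.e. $\dist_\Gamma(\lambda, w) = 0$ on $\Omega$. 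Finally, since $\dist_\Gamma(\lambda, w)$ is a continuous function of the state that vanishes on $\Omega$ and the trajectory is attracted to $\Omega$, a standard contradiction argument yields $\dist_\Gamma(\lambda(t), w(t)) \to 0$, which is (ii). The delicate points to get right are the passage from $B^T B H(\eta-w) = \0$ to $H(\eta-w) = \0$ (where both the orthogonality $\mc{R}(B)\perp\mc{N}(B^T)$ and the positive-definiteness of $Q$ are essential) and the justification that the invariance/attraction argument transfers the steady-state optimality back to the actual trajectory.
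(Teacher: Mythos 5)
Your proposal is correct and follows essentially the same route as the paper's proof: the same storage function $U=\tfrac12\|x-x^w\|^2+\tfrac12\|\eta-w\|^2$ with $\dot U\le-\|B^Tx\|^2$, an invariance argument on the limit set, and the same algebraic step showing the flow mismatch lies in $\mc{N}(B)\cap Q^{-1}\mc{R}(B^T)=\{\0\}$ via the quadratic form $\nu^TQ\nu=0$. The only (minor, equivalent) differences are your use of Barbalat plus the $\omega$-limit set where the paper invokes LaSalle, and your explicit remark that boundedness of the accumulated imbalance is needed for $x$ itself (rather than $x-x^w$) to be bounded -- a point the paper glosses over.
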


\begin{proof}
First note that the optimal routing $\lambda^w(t)=Q^{-1} B^T \zeta^w(t)$ satisfies the identity
\[\ba{l}
B\lambda^w(t)+Pw(t)=  - B Q^{-1} B^T L_{Q}^{\dagger}P\eta^w(t)+Pw(t)=\\ 
-\Delta_n P\eta^w(t)+Pw(t)=\1 \dst\frac{\1^T P w(t)}{n}.
\ea\]
Since $\1\frac{\1^T P w(t)}{n}=\dot x^w$ (see (\ref{sol.re})), the optimal routing is such that $\dot x^w(t)=B\lambda^w(t)+Pw(t)$.

Now, consider the storage function $U(x-x^{w},\eta,\eta^{w}) = \frac{1}{2}\|x - x^{w}\|^{2} + \frac{1}{2}\|\eta - \eta^{w}\|^{2}$ 
along the solutions of the autonomous system  $\frac{d}{dt} (x-x^w) = B (\lambda-\lambda^w)=-BB^T (x-x^w)+BH\eta-BH\eta^w$, $\dot \eta= s(\eta)-H^T B^T(x-x^w)$, $\dot \eta^w= s(\eta^w)$. 
It  satisfies
\begin{align*}
\dot{U} =& -(x-x^{w})^{T}BB^{T}x+ (x-x^{w})^{T}BH(\eta - \eta^{w}) \\
&+ (\eta - \eta^{w})^{T}(s(\eta) - s(\eta^{w}))  - (\eta - \eta^{w})^{T}H^{T}B^{T}x \\
\leq & -\|B^{T}x\|^{2},
\end{align*}
due to the incremental passivity of the exosystem, i.e., $(\eta - \eta^{w})^{T}(s(\eta) - s(\eta^{w})) \leq 0$. Since $U$ is positive semidefinite and $\eta^w$ is bounded (again by the incremental passivity property of the exosystem), we have that $x-x^w$, $\eta$, $\eta^w$ are all bounded. Then, by LaSalle's invariance principle, the trajectories converge to the largest invariant set such that 
$B^{T}(x-x^w) =B^{T}x = \mathbf{0}$. Thus, there exists $x_{*}$ such that on this set $x-x^w=x_{*}\1$ and the dynamics evolves as
\begin{align}\label{eqn.OptFlowSolInAgg}
\dot{x}_{*}\1 = BH \eta -BH \eta^w,\; \dot{\eta} = s(\eta),\; \dot{\eta}^w = s(\eta^w).
\end{align}
After multiplying by $\frac{1}{n}\1^{T}$  from the left, it follows $\dot{x}_{*} = 0$, proving that $x$ must approach $x^{w}$ modulo a constant.  This proves the claim (i) of the statement.
\\
To prove claim (ii), note that inserting $\dot{x}_{*} = 0$ into \eqref{eqn.OptFlowSolInAgg} and bearing in mind that $\eta^w =w$ gives the necessary condition that in the set where $B^{T}(x-x^w) = \mathbf{0}$ it must hold that
\[\ba{c}
\mathbf{0}=BH \eta -BH \eta^w= BH \eta -BH w= \\
-BQ^{-1}B^{T}L_{Q}^{\dagger}P(\eta -w) = -\Delta_{n}P(\eta - w).
\ea\]
Hence, $\Delta_{n} P \eta = \Delta_{n} P w$. 
The flow on the invariant set is $\lambda = Q^{-1}B^{T}L_{Q}^{\dagger}P\eta$, while the optimal flow is $\lambda^{w} = Q^{-1}B^{T}L_{Q}^{\dagger}Pw$. Together with the previous condition, this implies that there is a vector $\nu \in \mc{N}(B)$ such that $\lambda = \lambda^{w} + \nu$.  We will show next that $\nu$ must be identical to zero. Note that $\nu = \lambda - \lambda^{w}$, and must therefore satisfy
\[\nu = Q^{-1}B^{T}L_{Q}^{\dagger}P(\eta - w). \]
Multiplying the previous equation from the left by $\nu^{T}Q$ leads to
\[ \nu^{T}Q\nu = 0\]
since $\nu^{T}B^{T} = 0$. As $Q$ is by assumption positive definite, the only solution is $\nu = 0$. This proves that in the set where $B^{T}x = 0$ it must hold that $\lambda = \lambda^{w}$, completing the proof.
\end{proof}

 If additionally communication between the controllers is allowed, the controller can be augmented with a consensus term of the form \eqref{eqn.ControllerAugmented}.

\subsection{Simulation Example}
We illustrate the performance of the controller on a design example.
Consider a network with four inventories and five transportation lines as illustrated in Figure \ref{fig.NetworkStructure}.
\begin{figure}
\begin{center}
\begin{tikzpicture}
\node[draw, circle, fill= white, drop shadow] (v1) at (0,1.5) {$x_{1}$};
\node[draw, circle, fill= white, drop shadow] (v2) at (2,1.5) {$x_{2}$};
\node[draw, circle, fill= white, drop shadow] (v3) at (0,0) {$x_{3}$};
\node[draw, circle, fill= white, drop shadow] (v4) at (2,0) {$x_{4}$};

\draw[->] (v1)  -- node[anchor=south]{$\lambda_{1}$} (v2);
\draw[->] (v1)  -- node[anchor=east]{$\lambda_{2}$} (v3);
\draw[->] (v2)  -- node[anchor=south]{$\lambda_{3}$} (v3);
\draw[->] (v2)  -- node[anchor=west]{$\lambda_{4}$} (v4);
\draw[->] (v3)  -- node[anchor=north]{$\lambda_{5}$} (v4);

\draw[<->] (-1,1.5)  -- node[anchor=south]{$w_{1}$} (v1);
\draw[<->] (v2)  -- node[anchor=south]{$w_{2}$} (3,1.5);
\draw[<->] (v3)  -- node[anchor=south]{$w_{3}$} (-1,0);
\draw[<->] (v4)  -- node[anchor=south]{$w_{4}$} (3,0);
\end{tikzpicture}
\caption{Inventory network with four inventories and five transportation lines.}
\label{fig.NetworkStructure}
\end{center}
\end{figure}
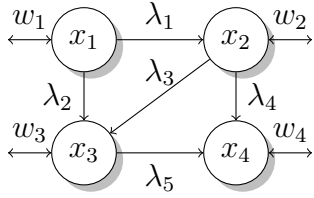
The supply/demand at each inventory is generated by the linear dynamics 
\[
\dot{w}_{i} = \begin{bmatrix}
0 & -s_{i} \\ s_{i} & 0
\end{bmatrix}w_{i} 
\]
 with $s_{1} = 0.1, s_{2} = 0.7, s_{3} = -0.4$ and $s_{4} = -0.2$ and initial conditions $w_{i}(0) = [1,1]^{T}$.  The flow cost function is of the form \eqref{eqn.QuadraticCost} with $Q = \mathrm{diag}\{1,2,3,4,5\}$.
 The controller is implemented in the distributed form with communication \eqref{eqn.ControllerAugmented}, where $H$ is chosen to satisfy \eqref{eqn.Hopt}, and $\mc{G}_{comm}$ is chosen such that two controllers communicate if they are incident to the same inventory. \\
 The simulation results for the inventory levels are shown in Figure \ref{fig.InventoryLevels}. Note that the supply/demand is not balanced, but the accumulated imbalance is bounded. The controller achieves a balancing of the inventory levels.
As an example, the flow $\lambda_{1}(t)$ is shown in Figure \ref{fig.Flow}. The flow approaches fairly quickly the time-varying \emph{optimal flow}.  The simulations illustrate that the controller achieves both objectives, the balancing of the inventory levels and the optimal routing of the flow through the network.
 
\begin{figure}
\begin{subfigure}[t]{.5\linewidth}

  \includegraphics[scale= 0.45]{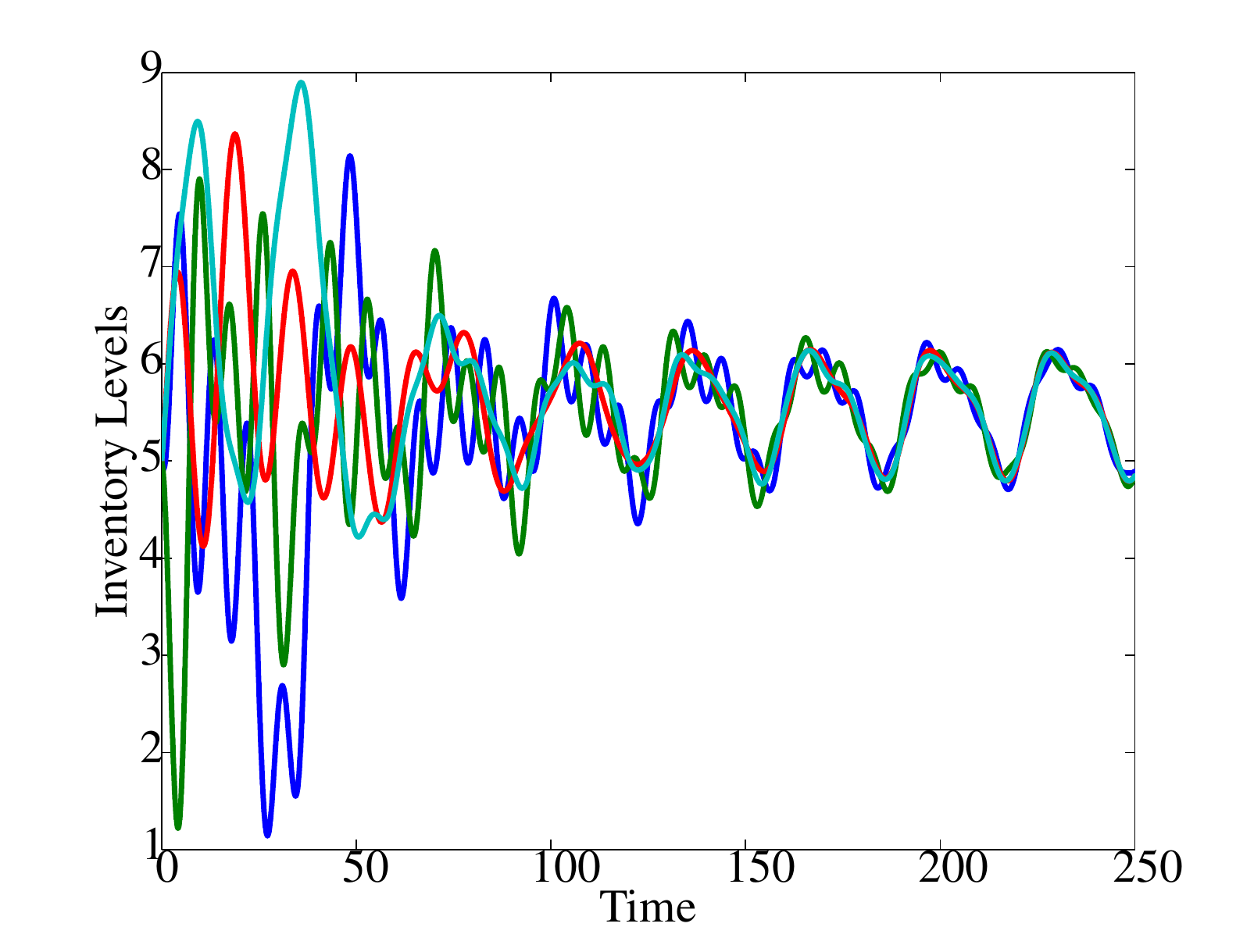}
\caption{Inventory levels $x_{i}(t),\; i \in \{1,2,3,4\},$\\  under time-varying supply/demand.} \label{fig.InventoryLevels}
\end{subfigure}
\begin{subfigure}[t]{.5\linewidth}
\includegraphics[scale= 0.45]{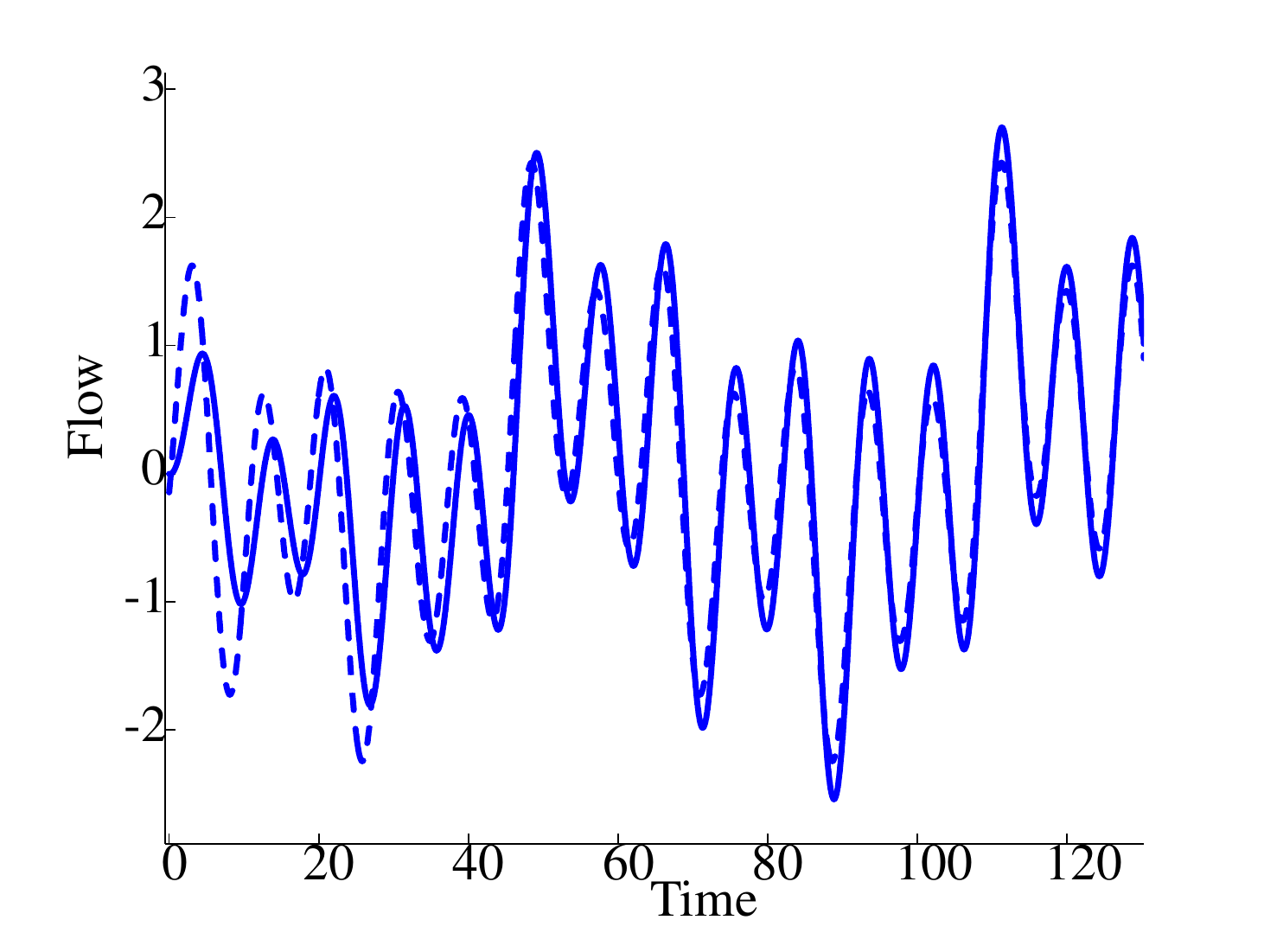}
\caption{Flow $\lambda_{1}(t)$ (solid) and optimal flow on the corresponding edge (dashed). } \label{fig.Flow}
\end{subfigure}
\end{figure}

\section{Power Systems Droop-Control as Internal Model Control} \label{sec.Droop}
In \cite{SimpsonPorco2013} a dynamic oscillatory model of microgrids with frequency-droop controllers is investigated. We provide next an interpretation of the results of \cite{SimpsonPorco2013} in the context of internal model control.

The model of \cite{SimpsonPorco2013} for the frequency-droop controller is
\begin{align} \label{eqn.DroopController}
D_{i} \dot{\theta}_{i} = P_{i}^{*} - P_{e,i}, \; i \in \{1,\ldots,n\},
\end{align}
where  $D_{i}$ is the inverse of the controller gain, $P_{i}^{*}$ is the inverters nominal power, and $P_{e,i}$ is the active electric power. The active electric power is given by
\begin{align} \label{eqn.PowerBalance}
P_{e,i} = \sum_{j=1}^{n} \alpha_{ij} \sin(\theta_{i} - \theta_{j}),
\end{align}
where $\alpha_{ij}$ are constants depending on the node voltages and the line admittance. The coefficients are symmetric $\alpha_{ij} = \alpha_{ji}$ and only non-zero if the two nodes $i$ and $j$ are connected by a line. We refer to \cite{SimpsonPorco2013} for a detailed discussion of the model.
As in \cite{SimpsonPorco2013}, we restrict the discussion in the following to \emph{acyclic} networks. 

In the proposed model, the dynamics in the nodes \eqref{eqn.DroopController} represents the controllers, while the couplings between the nodes, i.e., \eqref{eqn.PowerBalance}, are physical laws. 
Although the situation is reversed to the basic setup of this paper, we can still interpret the droop-controller as an internal model controller.  \\
Consider the node dynamics \eqref{nonl.systems.0} as \eqref{eqn.DroopController}, with node state $x_{i} = \theta_{i}$, constant external signal $w_{i} = P_{i}^{*}$, satisfying $\dot{w}_{i} = 0$,  input $u_{i} = -P_{e,i}$ and output $y_{i} = \dot{\theta}_{i}$, i.e., 
\begin{align} \label{sys.DroopNode}
D_{i} \dot{x}_{i} = P_{i}^{*} + u_{i}, \quad y_{i} = \dot{x}_{i}.
\end{align}
By defining the inputs and outputs in this way, the node dynamics is \emph{output strictly incrementally passive} since for any to inputs $u_{i}, u_{i}'$ and the two corresponding outputs $y_{i}, y_{i}',$ it holds that
\begin{align*}
(y_{i} -y_{i}')(u_{i}- u_{i}')& =(y_{i} -y_{i}')(D_{i}y_{i} - P_{i}^{*} - D_{i}y_{i}' + P_{i}^{*})\\
& = D_{i}\|y_{i} - y_{i}'\|^{2}. 
\end{align*}
From the interpretation of the node dynamics \eqref{nonl.systems.0} as  \eqref{eqn.DroopController}, one notices that $u=-P_{e}= -B A\, \boldsymbol{\sin}(B^T x)$, where $ \boldsymbol{\sin}(z)  = [\sin(z_{1}), \ldots, \sin(z_{m})]^{T}$, $A = \mathrm{diag}\{a_{1},\ldots, a_{m}\}$, $a_k = \alpha_{ij}=\alpha_{ji}$,  and $k$ is the label of the edge connecting nodes $i,j$. One can then interpret the latter equation as the first one of the interconnection conditions (\ref{int.constr}), provided that $\lambda =  A \,\boldsymbol{\sin} (B^T x)$. Set now $\eta= B^T x$. Then 
\begin{align}
\begin{split} \label{sys.DroopEdge}
\dot{\eta} &= -B^{T}y \\
\lambda &= A \, \boldsymbol{\sin}(\eta).
\end{split}
\end{align}
This can be understood as the stacked controllers (\ref{internal.model.k.plus.stab}), where, for all $k$,  $\phi_k(\eta_k, v_k)= v_k$, $v=B^{T}y$, $\psi_k(\eta_k)=a_k \sin(\eta_k)$.

Hence, rewriting the model (\ref{eqn.DroopController})-(\ref{eqn.PowerBalance}) in this way leads directly to an interpretation as an internal model control loop of the form \eqref{sys.ControllerStatic}, where the feed-through term can be omitted, i.e., $\nu = 0$, since the node dynamics is output strictly incrementally passive (see Corollary \ref{coro1}).
We can now restate the result of \cite{SimpsonPorco2013} in the context of internal model control. 
\begin{proposition}\label{prop.droop.controller}
Consider the droop-controller dynamics in the form \eqref{sys.DroopNode} and \eqref{sys.DroopEdge} and let the underlying network $\mc{G}$ be acyclic. Then
\begin{enumerate}
\item the regulator equations \eqref{regulator.equations} are solved by 
$\dot{x}^{w}_{i} = \frac{\sum_{i=1}^{n}P_{i}^{*}}{\sum_{i=1}^{n}D_{i}}=:y^w$ and 
$u^{w}_{i} =D_i\frac{\sum_{i=1}^{n}P_{i}^{*}}{\sum_{i=1}^{n}D_{i}} - P_{i}^{*}$ for all $i=1,2,\ldots, n$;
\item the embedding condition \eqref{embedding}  is feasible if and only if
$ \|A^{-1}(B^{T}B)^{-1}B^{T}u^{w}\|_{\infty} < 1;$
\item if the necessary conditions  \eqref{regulator.equations} and  \eqref{embedding} hold, 
the solutions to the closed loop dynamics \eqref{sys.DroopNode} and \eqref{sys.DroopEdge} with interconnection $u = B\lambda$ and that originate sufficiently close to $x^{w}$ and $\eta^{w}:=\boldsymbol{\sin}^{-1}(A^{-1}(B^{T}B)^{-1}B^{T}u^{w})$ satisfy $\lim_{t \rightarrow\infty} \|y_{i} -y^{w}\| \rightarrow 0$. 
\end{enumerate}
\end{proposition}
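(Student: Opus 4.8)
The plan is to treat the three claims in order, since each feeds into the next. \textbf{Part 1} is purely algebraic. At the output-agreement steady state every node must share a common output $y^w_i = y^w$ (the graph is connected), so $\dot x^w_i = y^w$ for all $i$, while the interconnection forces $u^w \in \mc{R}(B)$, i.e. $\1^T u^w = 0$. Writing the node relation $D_i y^w = P^*_i + u^w_i$ and summing over $i$ makes the coupling term vanish, yielding $y^w = (\sum_i P^*_i)/(\sum_i D_i)$; back-substitution gives $u^w_i = D_i y^w - P^*_i$. I would then check $\1^T u^w = (\sum_i D_i)\,y^w - \sum_i P^*_i = 0$, confirming $u^w \in \mc{R}(B)$, so the regulator equations \eqref{regulator.equations} are indeed solved by these expressions.

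\textbf{Part 2} reduces to solving the output-matching equation of the internal model. Since $\dot w = 0$, the embedding \eqref{embedding} is met with $\tau = \mathrm{Id}$ and $\phi = 0$, and the only real constraint is the existence of $\eta^w$ with $B\lambda^w = u^w$, where $\lambda^w = A\boldsymbol{\sin}(\eta^w)$. Because $\mc{G}$ is acyclic, $B$ has full column rank and $\mc{N}(B) = \{\0\}$, so the preimage is the unique $\lambda^w_p = (B^TB)^{-1}B^T u^w$. The matching equation then reads componentwise $a_k \sin(\eta^w_k) = [(B^TB)^{-1}B^T u^w]_k$, i.e. $\boldsymbol{\sin}(\eta^w) = A^{-1}(B^TB)^{-1}B^T u^w$. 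Solvability on the principal branch --- the branch needed in Part 3 --- is exactly the requirement that every component of the right-hand side lie in $(-1,1)$, i.e. $\|A^{-1}(B^TB)^{-1}B^T u^w\|_\infty < 1$; this then fixes $\eta^w = \boldsymbol{\sin}^{-1}\!\big(A^{-1}(B^TB)^{-1}B^T u^w\big)$.

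\textbf{Part 3} is where the work lies, and it is genuinely local because $\boldsymbol{\sin}$ is not globally monotone, so the edge controller \eqref{sys.DroopEdge} violates the global incremental-passivity Assumption \ref{a.ip.im}. The strategy combines two ingredients already in the paper. On the node side, the dynamics \eqref{sys.DroopNode} are output strictly incrementally passive with rate $D_i\|y_i - y'_i\|^2$ (as computed just above the proposition), which lets me set $\nu = 0$ exactly as in Corollary \ref{coro1}. On the controller side, I would verify the \emph{weaker} Assumption 3a via the Bregman storage function of the static-disturbance proposition, built from $\Psi_k(\eta_k) = -a_k\cos(\eta_k)$: its gradient is $\psi_k(\eta_k) = a_k\sin(\eta_k)$ and its second derivative $a_k\cos(\eta_k)$ is strictly positive on $(-\pi/2,\pi/2)$. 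The strict bound from Part 2 places $\eta^w$ in the interior of this interval, so $\Psi_k$ is \emph{locally} strongly convex around $\eta^w_k$ and the Bregman divergence $W_k(\eta_k,\eta^w_k)$ is locally positive definite with $\frac{\partial W_k}{\partial \eta_k}\dot\eta_k = (\lambda_k - \lambda^w_k)v_k$. Taking $U = V + \sum_k W_k$ with the interconnection $u = B\lambda$, $v = -B^T y$ and cancelling the off-diagonal terms as in the proof of Theorem \ref{th1} yields $\dot U \le -\sum_i D_i\|y_i - y^w\|^2 \le 0$.

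Finally, since $w$ is constant the closed loop is autonomous, so I would close with LaSalle's invariance principle rather than Barbalat: restricting to a compact sublevel set of $U$ that stays inside the region where every $\eta_k$ remains in $(-\pi/2,\pi/2)$ makes that set forward invariant, and the largest invariant subset of $\{\dot U = 0\}$ enforces $y_i = y^w$ for all $i$, giving $\lim_{t\to\infty}\|y_i - y^w\| = 0$. This sublevel-set restriction is precisely the ``originate sufficiently close to $x^w$ and $\eta^w$'' hypothesis. The main obstacle throughout is this localization: one must certify that the chosen neighborhood of $(x^w,\eta^w)$ never lets any $\eta_k$ escape $(-\pi/2,\pi/2)$, since outside that window the sine nonlinearity destroys the sign of $\dot U$; constructing such an invariant neighborhood, and accepting that only the output $y_i$ (not the integrator-like state $x$) converges, is the delicate step.
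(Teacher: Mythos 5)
Your proposal is correct and follows essentially the same route as the paper's proof: summing the node equations for Part 1, inverting $B$ via $(B^TB)^{-1}B^T$ on the acyclic graph for Part 2, and for Part 3 the Bregman storage function built from $\Psi_k(\eta_k)=a_k(1-\cos(\eta_k))$ (identical to yours up to an additive constant) combined with output strict incremental passivity of the nodes and a local LaSalle-type argument on $(-\pi/2,\pi/2)^m$. Your explicit attention to forward invariance of the sublevel set is, if anything, slightly more careful than the paper's treatment of the localization.
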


The proof follows completely along the lines of the internal model control approach (except the local nature of the stability result) and exploits in particular the results for static disturbances of Section \ref{sec.StaticDist}. For completeness, we provide the proof in the appendix.

 \section{Conclusions}
The paper has investigated output agreement problems in the presence of time-varying disturbances and has discussed the role of dynamic internal-model-based controllers to tackle these problems. We focus on the case in which only relative measurements are available to the controllers and the control applied to the systems  must lie in the range of the incidence matrix. This scenario in fact is very important in distribution networks and is motivated by the physics of the network (Kirchhoff's law). We have examined two of these distribution networks, namely an inventory system and a microgrid, and we have interpreted a load balancing controller and the frequency-droop controller within the proposed framework. Furthermore, in the case of the inventory system, we have shown controllers that achieve an optimal routing. \\
The proposed methodology lends itself to several possible extensions. The use of dynamic controllers could be exploited not only to tackle the presence of exogenous disturbances but also to deal with synchronization problems of heterogenous systems for which a static diffusive coupling does not suffice. In many other distribution networks, and similarly to the inventory system, the constraints imposed by the network induces a non-unique solution to the output agreement problem. It is then meaningful to design controllers that lead to a solution with optimal features. Our approach naturally lends itself to providing such solutions. Other aspects that could be studied are the presence of  uncertainties in the exosystems, larger classes of disturbance signals, robustness to other sources of uncertainties in the dynamical systems. In the current implementation, our internal model controllers depend on all the exosystems generating the disturbance, that could be unfeasible in practice and should be relaxed. 
Moreover, the potentials of our approach in the context of the two case studies have not  been fully explored yet. Phenomena to be studied are for instance the presence of constraints on the input and state variables. For the case of power systems, other classes of controllers could be considered, dealing for instance with the presence of time-varying exogenous inputs.

\appendix

\section{Appendix}

\subsection*{Proof of Proposition \ref{prop.FailureRE}}
\begin{proof}
{\it (Necessity})
If the rank condition in \eqref{eqref.FrancisCond} is violated then there exists a nonzero vector $(x_0, \lambda_0)$ such that 
\be\label{aux0}
 \begin{bmatrix}    \bar{A} - s I_{r} & \bar{G}(B \otimes I_{p}) \\
(B\otimes I_{p})^{T}\bar{C} & \0_{np \times np} \end{bmatrix} 
\begin{bmatrix} 
x_0 \\ \lambda_0
\end{bmatrix}=\mathbf{0}.
\ee
This equality can be made explicit as 
\be\label{aux1}\ba{r}
(\bar{A} - s I_{r}) x_0 + \bar{G}(B \otimes I_{p}) \lambda_0 = \mathbf{0}\\
(B\otimes I_{p})^{T}\bar{C}  x_0= \mathbf{0}.
\ea
\ee
 As $s\not \in \sigma(\bar A)$, then necessarily $\lambda_0\ne \mathbf{0}$.  \\
The graph $\mc{G}$ may or may not contain a cycle. If it does, then statement (1) of the thesis holds. If not, then $(B \otimes I_{p}) \lambda_0\ne \mathbf{0}$ (recall that  $\lambda_0\ne \mathbf{0}$). Since $s$ is not an eigenvalue of any $A_i$, then from (\ref{aux1})
\[\ba{r}
x_0= (s I_{r}-\bar{A})^{-1}\bar{G}(B \otimes I_{p}) \lambda_0\\
(B\otimes I_{p})^{T}\bar{C}  (s I_{r}-\bar{A})^{-1}\bar{G}(B \otimes I_{p}) \lambda_0= \mathbf{0}
\ea\]
The latter shows that $\mc{R}(\bar H(s)(B \otimes I_{p}))\cap \mc{N}((B\otimes I_{p})^{T})\ne \{\mathbf{0}\}$, that is statement (2) of the thesis. \\
{\it (Sufficiency)} If $\mc{G}$ does not contain cycles, then for all $\lambda_0\ne \mathbf{0}$, $(B \otimes I_{p}) \lambda_0\ne  \mathbf{0}$. Consider now the nonzero vector $(x_0, \lambda_0)=(\mathbf{0}, \lambda_0)$. The product 
\[
 \begin{bmatrix}    \bar{A} - s I_{r} & \bar{G}(B \otimes I_{p}) \\
(B\otimes I_{p})^{T}\bar{C} & \0_{np \times np} \end{bmatrix} 
\begin{bmatrix} 
\mathbf{0} \\ \lambda_0
\end{bmatrix}
\]
returns a zero vector. This shows that the null space of the matrix on the left-hand side is singular, that is condition \eqref{eqref.FrancisCond} is violated. \\
If $\mc{R}(\bar H(s)(B \otimes I_{p}))\cap \mc{N}((B\otimes I_{p})^{T})\ne \{\mathbf{0}\}$, then there exists a  vector $\lambda_0\ne \mathbf{0}$ such that  
\[
(B\otimes I_{p})^{T}\bar H(s)(B \otimes I_{p})\lambda_0=\mathbf{0}. 
\]
By definition of $\bar H(s)=\bar{C}  (s I_{r}-\bar{A})^{-1}\bar{G}$, the latter equality becomes
\[
(B\otimes I_{p})^{T}\bar{C}  (s I_{r}-\bar{A})^{-1}\bar{G}(B \otimes I_{p})\lambda_0=\mathbf{0}. 
\]
Define $x_0=(s I_{r}-\bar{A})^{-1}\bar{G}(B \otimes I_{p})\lambda_0$. 
Then 
\[\ba{l}
(B\otimes I_{p})^{T}\bar{C}x_0=\mathbf{0}\\
(\bar{A}-s I_{r})x_0+\bar{G}(B \otimes I_{p})\lambda_0=\mathbf{0},
\ea\]
which is the same as (\ref{aux0}). But this shows that condition \eqref{eqref.FrancisCond} is violated.
\end{proof}

 \subsection*{Proof of Corollary \ref{cor.SPR}}

\begin{proof}
Under the given assumptions, the equations \eqref{eqn.FrancisEquations} are feasible if and only if Condition \ref{LinItem2} in Proposition \ref{prop.FailureRE} does not hold. \\
Suppose, by contradiction, that $H(s)$ is strictly positive real and Condition \ref{LinItem2} in Proposition \ref{prop.FailureRE} holds.
The condition can equivalently be expressed as follows: there exist vectors $v \in \mathbb{R}^{np}$ and $\mathbf{0}\ne \beta \in \mathbb{R}^{p}$ such that
\[ \bar{H}(s) (B \otimes I_{p}) v = \1_{n} \otimes \beta,\; \forall s \in \sigma(\bar{S}).\]
Multiplying the previous condition from the left by $v^{T}(B \otimes I_{p})^{T}$ leads to
\[v^{T}(B \otimes I_{p})^{T} \bar{H}(s) (B \otimes I_{p}) v = 0, \; \forall s \in \sigma(\bar{S}).\]
Since all $s \in \sigma(\bar{S})$ have zero real part, is equivalent to $\tilde{v}^{T}\left( \bar{H}(j\omega ) + \bar{H}(-j\omega ) \right)\tilde{v} = 0$ for all $\tilde{v} = (B \otimes I_{p}) v$ and for some $\omega\in \mathbb{R}$.  \\
This is a contradiction since $H(s)$ being strictly positive real implies that $\left( \bar{H}(j\omega ) + \bar{H}(-j\omega ) \right)$ is positive definite for all $\omega \in \mathbb{R}$.  This proves the statement.
\end{proof} 
 
 \subsection*{Proof of Proposition \ref{prop.droop.controller}}
\begin{proof}
The first statement follows directly after summing all equations \eqref{sys.DroopNode} and noting that there must be a scalar valued function
 $y^{*} \in \mathbb{R}$ such that $y_{i} = \dot{x}_{i}^{w} = y^{*}$ for all $i \in \{1,\ldots,n\}$. \\
To prove the second statement, we choose $\phi(\tau(w)) = 0$ since $\dot{w} = 0$. Now, note that $u^{w} = B\lambda^{w}$, and since the network is acyclic, $\lambda^{w}$ is uniquely defined as
$\lambda^{w} = (B^{T}B)^{-1}B^{T}u^{w}$. Since for an acyclic network $\mc{N}(B) = \{\0\}$, the second condition in \eqref{embedding} becomes
$(B^{T}B)^{-1}B^{T}u^{w} = \psi(\tau(w)),$ where we can take $\psi(\tau(w)) = A\, \boldsymbol{\sin}(\tau(w))$. Thus, we have
\[ \tau(w) = \, \boldsymbol{\sin}^{-1}(A^{-1}(B^{T}B)^{-1}B^{T}u^{w}),\]
which exists if and only if $\|A^{-1}(B^{T}B)^{-1}B^{T}u^{w}\|_{\infty}<1$. \\
To prove local stability we use the standard storage function \eqref{eqn.Bregman}. Note that it can be defined with
$\Psi_{k}(\eta_{k}) =a_k(1 -\cos(\eta_{k}))$.
If the conditions \eqref{embedding} hold, choose $\eta^{w} = \tau(w)$. 
Stability follows now with the storage function $U((x,x^{w}),(\eta,\eta^{w}) = \sum_{i=1}^{n}V_{i}(x_{i},x_{i}^{w}) + \sum_{k=1}^{m}W_{k}(\eta_{k},\eta_{k}^{w})$, with $V_{i}(x_{i},x_{i}^{w}) = 0$ and $W_{k}$ defined as \eqref{eqn.Bregman}. Note that $U$ is positive semidefinite in a neighborhood around $x^{w}$ and $\eta^{w}$ and such that $\eta, \eta^w\in (-\frac{\pi}{2},\frac{\pi}{2})^m$,  and satisfies
\be\label{U.dot} \ba{l}
\dot{U} \leq  -\sum_{i=1}^{n}D_{i}\| y_{i}(t)  - y_{i}^{w}(t) \|^{2}\\
=  -\sum_{i=1}^{n}D_{i}^{-1}\| \sum_{k=1}^m b_{ik} a_k (\sin(\eta_k(t))-\sin(\eta_k^w(t)))\|^{2},
\ea\ee
due to output strict incremental passivity of \eqref{sys.DroopNode}.  
Note that the latter inequality involves only the variables $\eta, \eta^w$. Hence, it shows that the trajectories of the closed-loop system $\dot \eta = -B^T y= -B^T D^{-1} (P^*+BA\boldsymbol{\sin}(\eta))$ are bounded and converge to the set of points where  $BA\boldsymbol{\sin}(\eta)=BA\boldsymbol{\sin}(\eta^w)$ (i.e., to the set of points where $\boldsymbol{\sin}(\eta)=\boldsymbol{\sin}(\eta^w)$, since the graph has no cycles and $A$ is a diagonal matrix) or, equivalently, to the set of points where $y_{i}=y^{w}= \dot x^w_*$ for all $i$.
Thus, any trajectory originating sufficiently close to $x^{w}$ and $\eta^{w}$ satisfies $\lim_{t \rightarrow\infty} \|y_{i} -y^{w}\| \rightarrow 0$. 
\end{proof}

 \stopmodif

{\footnotesize
 \bibliographystyle{alpha}
\bibliography{Literature}
}

\end{document}